\newcounter{mnotecount}[section]
\newcommand{\mnotex}[1]
{\protect{\stepcounter{mnotecount}}$^{\mbox{\footnotesize $\bullet$\themnotecount}}$ 
\marginpar{
\raggedright\tiny\em
$\!\!\!\!\!\!\,\bullet$\themnotecount: #1} }
\newtheorem{lemma}{Lemma}
\newtheorem{theorem}{Theorem}
  \newcommand{\del}{\partial}
  \newcommand{\id}{\mathrm{id}}
  \newcommand{\Af}{\operatorname{Af}}
  \newcommand{\K}{\mathrm{K}}
  \newcommand{\CK}{\mathrm{CK}}
\DeclareFlexCompoundSymbol{\coloneq}{Rel}{:=}
\title{Conformal Killing Initial Data}
\author{%
	Alfonso Garc\'{\i}a-Parrado$^{\sharp}$%
		\thanks{E-mail: alfonso@utf.mff.cuni.cz} {} and
	Igor Khavkine$^\flat$%
		\thanks{E-mail: khavkine@math.cas.cz}
\\[2ex]
	{\small $^\sharp$Institute of Theoretical Physics, Faculty of Mathematics and Physics,}\\
	{\small Charles University in Prague, V~Hole\v{s}ovi\v{c}k\'ach~2, 180~00 Praha 8, Czech Republic}\\
	{\small $^\flat$Institute of Mathematics of the Czech Academy of Sciences,}\\
	{\small \v{Z}itn{\'a} 25, 115 67 Praha 1, Czech Republic}
}
\date{}
\begin{document}
\maketitle

\begin{abstract}
We find necessary and sufficient conditions ensuring that the vacuum
development of an initial data set of the Einstein's field equations
admits a conformal Killing vector. We refer to these conditions as {\em
conformal Killing initial data} (CKID) and they extend the well-known
{\em Killing initial data} (KID) that have been known for a long time.
The procedure used to find the CKID is a classical argument, which is
reviewed and presented in a form that may have an independent interest,
based on identifying a suitable {\em propagation} identity and
checking the well-posedness of the corresponding initial value problem.
As example applications, we review the derivation of the KID conditions,
as well as give a more thorough treatment of the homothetic Killing
initial data (HKID) conditions than was previously available in the
literature.
\end{abstract}

\section{Introduction} \label{sec:intro}

It is an interesting observation, first made in various forms
in~\cite{berezdivin,MONCRIEF-KID1,MONCRIEF-KID2,COLL77}, that there exists a set of linear partial
differential equations (PDEs) defined on the background of an initial
data surface for Einstein's vacuum equations such that the solutions of
these PDEs are in bijection with the Killing vectors of the vacuum
spacetime evolved from this surface. Fittingly, this system of PDEs is
known as the \emph{Killing initial data (KID)} equations, so named
in~\cite{beig-chrusciel}. Later, the KID equations have been generalized
to cover Einstein's equations coupled to rather general kinds of
matter~\cite{racz-kid1, racz-kid2}. After~\cite{beig-chrusciel}, KID
equations have received a fair amount of attention in the mathematical
relativity literature.

Killing vectors are solutions of the Killing equation, a geometric PDE
on a Lorentzian (more generally, pseudo-Riemannian) geometry. A natural
question arises: what other geometric PDEs have analogous initial data
systems? A solution of such an initial data system on an initial data
surface (for Einstein's vacuum or other related equations) would give
rise to a unique solution of the corresponding geometric PDE in the bulk
geometry evolved from the initial data surface. It seems that this
question has so far been considered in only a small number of cases. An
early and somewhat neglected example is~\cite{berger}, which treated
the initial data equations for homothetic Killing vectors and partially
for conformal Killing vectors. Rather recently, the valence $(1,0)$,
$(0,1)$, $(2,0)$ and $(0,2)$ Killing spinor equations in $4$~dimensions
were also treated~\cite{GOMEZLOBO2008}. In a follow-up
work~\cite{gasperin-williams}, some of them were adapted to Friedrich's
conformal vacuum equations. Also, Killing $(2,0)$-spinor initial data
equations have found applications to the characterization of initial
data for the Kerr black hole family~\cite{KERR-INVARIANT-TJ,
KERR-INVARIANT-PRL, KERR-INVARIANT-PRS, racz-etal}, and in the study of
the manifold structure of the infinite dimensional space of initial data
for Einstein's equations~\cite{bcs-kids}. 

Conceivably, such initial data equations (or at least the methods used
to obtain them) could also find applications in the study of the
uniqueness and rigidity of asymptotically flat black holes. For example,
such rigidity results were discussed in~\cite{IONESCU-KLAINERMAN-K} and,
while they did not directly use KID equations, they did use a
propagation equation (see below) analogous but not identical to the
standard one we later give in Section~\ref{sec:kid}.

In this work, we obtain for the first time a complete derivation of the
\emph{conformal Killing initial data (CKID)} equations, that is, whose
solutions on an initial data surface are in bijection with conformal
Killing vectors on the Einstein vacuum geometry (in any number of
dimensions, $n>2$) evolved from this surface. This question was first
approached in the early work~\cite{berger}, which gave some necessary
conditions on the initial data of a conformal Killing vector, but
stopped short of giving a complete list and, a fortiori, did not prove
the sufficiency of any such list. We make a more detailed comparison
with our results in the introductions to Sections~\ref{sec:hkid}
and~\ref{sec:ckid}. Some of the ideas from~\cite{berger} were picked up
again in~\cite{eimm-ckv}, and some classes of exact solutions to the
corresponding initial data equations were studied
in~\cite{zannias,sharma}. But no progress on the CKID problem appears to
have been been made since then. In~\cite{PAETZ201651}, the author
obtained the transformation of the KID system under a conformal
transformation of the bulk geometry under the assumption that that the
metric conformal equations of Friedrich are fulfilled in the bulk but
did not obtain what we call the CKID system.

We expect our CKID equations to have applications in mathematical
relativity analogous to the ones already mentioned for other initial
data systems. The CKID equations may be particularly useful when coupled
with Friedrich's conformal version of Einstein
equations~\cite{FRIEDRICH83}, or the equivalent system of conformally
covariant nonlinear wave equations~\cite{Paetz2015,vk-etal}. When
restricted to $4$~dimensions, the conformal Killing and Killing
$(1,1)$-spinor equations are equivalent. Hence the spinorial version of
the CKIDs could have been extracted from the intermediate results
of~\cite{GOMEZLOBO2008}, but only in $4$~spacetime dimensions. 

Our method of proof follows the same basic strategy as the old work on
the Killing equation~\cite{COLL77}. It relies on a key identity, which
we call a \emph{propagation equation}. This strategy is summarized in
Section~\ref{sec:propeq}, where the main observation is
Lemma~\ref{lem:propeq}, with the generic form of the desired key
identity expressed in Equation~\eqref{eq:propeq}. In
Section~\ref{sec:kid} we recall how the KID equations are derived, as
well as introduce some notation that is heavily used in subsequent
sections. In Section~\ref{sec:hkid}, we apply the same strategy to
\emph{homothetic Killing initial data} (HKID), confirming that the
initial data conditions first obtained in~\cite{berger} are in fact
sufficient. Finally, in Section~\ref{sec:ckid} we follow an analogous
route to obtain the CKID equations (Theorem~\ref{thm:ckid}).

It is also worth noting that the propagation equation identities giving
rise to KID, HKID and CKID systems are covariantly constructed and their form
does not explicitly depend on the signature of the metric tensor. Thus,
they would apply also in other signatures, like in Riemannian geometry.
In Lorentzian signature, we expect the propagation equations to be
hyperbolic and hence have a well-posed initial value problem. On the
other hand, in Riemannian signature, we expect the propagation equations
to be elliptic and hence have a well-posed boundary value problem. Then,
the bulk Killing or conformal Killing vectors will still induce
solutions of the KID or CKID equations on the boundary, but there may
exist solutions on the boundary that do not correspond to bulk solutions
when the elliptic propagation equations have non-trivial solutions for
homogeneous boundary conditions. The uniqueness of the (trivial)
solution for elliptic homogeneous boundary value problems may be
guaranteed using the Bochner method~\cite{bochner-book}, or some other
technique. Under such hypotheses, then the existence of Killing or
conformal Killing vectors on Ricci flat Riemannian manifolds with
boundary could be predicted by the existence of solutions of KID, HKID
or CKID equations with respect to the boundary data. It seems that such
applications have not yet been considered in Riemannian geometry.

All the computations of this paper have been double-checked with the
tensor computer algebra systems {\em Cadabra} and {\em
xAct}~\cite{CADABRA, cadabra2, XACT,XPERM}.

\section{Propagation equations and initial data} \label{sec:propeq}

From now on, all or our differential operators are presumed to be
defined between vector bundles over a manifold $M$ and have smooth
coefficients.

We call a linear partial differential equation (PDE) $P[\psi]=0$ a
\emph{propagation equation (of order $k\ge 1$)} if it has a well-posed
initial value problem: given a Cauchy surface $\Sigma \subset M$ with
unit normal $n^a$, the equation can be put into Cauchy-Kovalevskaya form
(solved for the highest time derivative) and for each assignment of arbitrary
smooth initial data ${\psi|_{\Sigma}}=\psi_0, \ldots , \nabla_n^{k-1}
{\psi|_{\Sigma}}=\psi_{k-1}$ (where $\nabla_n = n^a\nabla_a$) there
exists a unique solution of $P[\psi]=0$ on all of $M$. In
particular, due to the linearity of the propagation equation, if the
initial data all vanish, $\psi_0 = \cdots = \psi_{k-1} = 0$, then
$\psi=0$ is the corresponding unique solution on $M$.

There are multiple examples of propagation equations: (a) Wave (a.k.a\ 
\emph{normally-hyperbolic}) equations, $P[\psi] = \square \psi +
P'(\nabla\psi,\psi)$~\cite{bgp}. (b) Transport equations, $P[\psi] = u^a
\nabla_a \psi + P'(\psi)$, with $u^a$ everywhere transverse to
$\Sigma$~\cite{IONESCU-KLAINERMAN-K}. (c) Special cases, like
$P_{bcd}[\psi] = \nabla^a \psi_{abcd}$ for $\psi_{abcd}$ satisfying the
symmetry and tracelessness conditions of the Weyl tensor in
4~dimensions~\cite{IONESCU-KLAINERMAN-K}. (d) At the end of this section
(Lemma~\ref{lem:gen-normhyp}), we give the name \emph{generalized
normally-hyperbolic} to a class generalizing that in (a).

\begin{lemma} \label{lem:propeq}
Consider a globally hyperbolic spacetime $(M,g)$, satisfying the
Einstein vacuum equations, $G_{ab} = R_{ab} - \frac{1}{2} R g_{ab} = 0$.
Let $E[\phi] = 0$ be a PDE (system) defined on some (possibly
multicomponent) field $\phi$. Suppose that there exist propagation
equations $P[\psi] = 0$, $Q[\phi] = 0$ (of respective orders $k$ and
$l$), where the differential operators $P$ and $Q$ satisfy the identity
\begin{dmath} \label{eq:propeq}
	P[E[\phi]] = \sigma[Q[\phi]] + \tau[G] ,
\end{dmath}
for some linear differential operators $\sigma$ and $\tau$. Then, given
a Cauchy surface $\Sigma \subset M$ with unit timelike normal $n^a$, the
unique solution of $Q[\phi] = 0$ with initial data ${\phi|_\Sigma} =
\phi_0, \ldots, \nabla_n^{l-1} {\phi|_\Sigma} = \phi_{l-1}$ satisfies
the equation $E[\phi] = 0$ provided the initial data ${\psi|_{\Sigma}}=0,
\ldots , \nabla_n^{k-1} {\psi|_{\Sigma}}=0$, for $\psi = E[\phi]$,
vanish.

In addition, there exists a purely spatial linear PDE on $\Sigma$,
$E^\Sigma[\phi_0,\ldots,\phi_{l-1}] = 0$ such that the conditions
$Q[\phi] = 0$ and $E^\Sigma[{\phi|_\Sigma}, \ldots,
\nabla_n^{l-1}{\phi|_\Sigma}] = 0$ imply the vanishing of the initial data
${\psi|_{\Sigma}}=0, \ldots , \nabla_n^{k-1} {\psi|_{\Sigma}}=0$ for $\psi =
E[\phi]$.
\end{lemma}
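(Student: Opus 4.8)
The plan for the first assertion is to substitute the hypotheses directly into the key identity~\eqref{eq:propeq}. Since $Q$ is a propagation equation, global hyperbolicity of $(M,g)$ guarantees that it has a unique solution $\phi$ on all of $M$ with the prescribed Cauchy data $\phi_0,\dots,\phi_{l-1}$; set $\psi=E[\phi]$. The equation $Q[\phi]=0$ makes $\sigma[Q[\phi]]$ vanish identically, and the vacuum condition $G_{ab}=0$ makes $\tau[G]$ vanish, so~\eqref{eq:propeq} collapses to $P[\psi]=0$ on $M$. Because $P$ is a propagation equation of order $k$ and we are assuming that the Cauchy data $\psi|_\Sigma=0,\dots,\nabla_n^{k-1}\psi|_\Sigma=0$ vanish, the uniqueness part of the definition of a propagation equation forces $\psi\equiv 0$, i.e.\ $E[\phi]=0$ throughout $M$.

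For the second assertion, the idea is that the Cauchy data of $\psi=E[\phi]$, namely the tuple $(\psi|_\Sigma,\nabla_n\psi|_\Sigma,\dots,\nabla_n^{k-1}\psi|_\Sigma)$, can be rewritten --- using only the equation $Q[\phi]=0$ --- as a purely spatial differential expression in the Cauchy data $\phi_0,\dots,\phi_{l-1}$ of $\phi$, and this expression is $E^\Sigma$. Concretely, I would work in a collar neighbourhood of $\Sigma$ foliated by the level sets of a time function whose leaves have unit normal $n^a$, so that ``$\nabla_n^p\phi|_\Sigma$'' and ``tangential derivative along $\Sigma$'' become unambiguous. Writing $m$ for the order of $E$ and expanding $\nabla_n^j E[\phi]|_\Sigma$ for $0\le j\le k-1$, then splitting every covariant derivative into its normal and tangential parts (the commutators only contribute curvature terms of $g$, which are fixed coefficients, not new unknowns), one sees that each $\nabla_n^j\psi|_\Sigma$ is a universal differential expression in the restrictions $\nabla_n^p\phi|_\Sigma$, $p\le m+k-1$, and their tangential derivatives along $\Sigma$. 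Now $Q[\phi]=0$, being in Cauchy-Kovalevskaya form of order $l$, expresses $\nabla_n^l\phi$ in terms of $\nabla_n^p\phi$ with $p<l$ and spatial derivatives thereof; differentiating this relation in the normal direction and re-substituting finitely many times eliminates every $\nabla_n^p\phi|_\Sigma$ with $p\ge l$ (only those with $l\le p\le m+k-1$ ever occur) in favour of a spatial differential expression in $\phi_0,\dots,\phi_{l-1}$. Carrying out these substitutions inside the tuple above defines $E^\Sigma[\phi_0,\dots,\phi_{l-1}]$, a finite-order purely spatial linear operator on $\Sigma$. If $\phi$ solves $Q[\phi]=0$, then its Cauchy data are precisely $\phi_0,\dots,\phi_{l-1}$ and the substituted expression coincides with the genuine tuple $(\psi|_\Sigma,\dots,\nabla_n^{k-1}\psi|_\Sigma)$; hence $E^\Sigma[\phi|_\Sigma,\dots,\nabla_n^{l-1}\phi|_\Sigma]=0$ is equivalent to $\psi|_\Sigma=\dots=\nabla_n^{k-1}\psi|_\Sigma=0$, as required.

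The first assertion is immediate. The only real work is in the second, and it is essentially bookkeeping: one must check that the adapted foliation makes the decomposition of $\nabla_n^j E[\phi]|_\Sigma$ well defined, and --- the point that most deserves care --- that the iterated substitution using the differentiated Cauchy-Kovalevskaya form of $Q$ really terminates and removes \emph{all} normal derivatives of $\phi$ of order $\ge l$, leaving a genuinely spatial operator of finite order. The combinatorics of derivative orders (the highest normal derivative that can occur is $m+k-1$, which bounds the number of substitutions needed) make this routine, and the resulting formula for $E^\Sigma$ is not canonical --- different substitution orders yield expressions that agree only modulo $Q[\phi]=0$ --- but any one choice gives a valid $E^\Sigma$.
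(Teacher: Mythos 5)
Your proposal is correct and follows essentially the same route as the paper: the first part is the same direct substitution of $Q[\phi]=0$ and $G=0$ into~\eqref{eq:propeq} followed by uniqueness for the propagation equation $P$, and the second part is the same elimination of normal derivatives $\nabla_n^N\phi$, $N\ge l$, via the iterated Cauchy--Kovalevskaya form of $Q[\phi]=0$ (your version just spells out the bookkeeping, e.g.\ the bound $m+k-1$ on the normal order, which the paper leaves implicit).
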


\begin{proof}
Under the hypotheses on the metric $g$ and $\phi$, both $G=0$ and
$Q[\phi] = 0$ vanish. Then, letting $\psi = E[\phi]$, the
identity~\eqref{eq:propeq} implies $P[\psi] = 0$. But, by the definition
of a propagation equation, the vanishing of the initial data for $\psi$
on $\Sigma$ implies that $E[\phi] = \psi = 0$ on all of $M$.

For the second part, first notice that our notion of well-posedness for
the propagation equation $Q[\phi] = 0$ implies that the values of the
time derivatives $\nabla_n^N \phi$ for $N\ge l$ are given by local
algebraic expressions in terms of the $\nabla_n^N \phi$ for $0\le N <
l$. Setting $\psi = E[\phi]$, the vanishing of the initial data for
$\psi$ may a priori involve time derivatives $\nabla_n^N \phi$ of orders
$N\ge l$. But replacing these higher order time derivatives by the above
expressions, reduces the dependence on time derivatives $\nabla_n^N\phi$
of order at most $N<l$. Then, obviously, these reduced order conditions
can be collected into a single equation, which we can denote by
$E^\Sigma[\phi_0,\ldots,\phi_{l-1}] = 0$.
\end{proof}

Intuitively, the original equation $E[\phi] = 0$ should be more
restrictive than the propagation equation $Q[\phi] = 0$, thus requiring
the $E^\Sigma[{\phi|_\Sigma},\ldots]=0$ initial data constraints to make
up the difference. However, as written above, Lemma~\ref{lem:propeq}
does not exclude situations where the equation $Q[\phi] = 0$
is the more
restrictive one (like the extreme example $Q[\phi]=\phi$). Thus, when we
would like every solution of $E[\phi]=0$ to also be a solution of
$Q[\phi]=0$, we will refer to the following obvious

\begin{lemma} \label{lem:propeq-conv}
Using the notation of Lemma~\ref{lem:propeq}, suppose there exists a
linear differential operator $\rho$ such that
\begin{equation} \label{eq:propeq-conv}
	Q[\phi] = \rho[E[\phi]] .
\end{equation}
Then any solution of $E[\phi] = 0$ is also a solution of $Q[\phi] = 0$
with vanishing initial data ${\psi|_{\Sigma}}=0, \ldots , \nabla_n^{k-1}
{\psi|_{\Sigma}}=0$, for $\psi = E[\phi]$.
\end{lemma}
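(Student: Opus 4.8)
The plan is to observe that Lemma~\ref{lem:propeq-conv} is essentially immediate once we unwind the definitions, so the "proof" is really just a matter of assembling the pieces already present in Lemma~\ref{lem:propeq} and its proof. First I would suppose $\phi$ is a solution of $E[\phi] = 0$. Applying the operator $\rho$ to this equation and using the hypothesis~\eqref{eq:propeq-conv}, we immediately get $Q[\phi] = \rho[E[\phi]] = \rho[0] = 0$, so $\phi$ solves the propagation equation $Q[\phi] = 0$ as claimed.

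It then remains to check that the initial data for $\psi = E[\phi]$ on $\Sigma$ vanish, i.e.\ that ${\psi|_{\Sigma}} = \nabla_n \psi|_{\Sigma} = \cdots = \nabla_n^{k-1}\psi|_{\Sigma} = 0$. But this is automatic: since $E[\phi] = 0$ identically on $M$, every derivative of $\psi = E[\phi]$ vanishes everywhere on $M$, and in particular all the normal derivatives up to order $k-1$ vanish on $\Sigma$. Hence the hypotheses of Lemma~\ref{lem:propeq} are satisfied for this $\phi$, and one may additionally note for consistency that $\phi$ is then the \emph{unique} solution of $Q[\phi] = 0$ with the given (its own) Cauchy data, so there is no clash with the conclusion of Lemma~\ref{lem:propeq}.

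Since no genuine obstacle arises here --- the statement is flagged in the text as "the following obvious [lemma]" --- the only thing to be careful about is not to overclaim: the lemma does not assert that \emph{every} solution of $Q[\phi]=0$ comes from a solution of $E[\phi]=0$, merely the converse inclusion together with the vanishing of the relevant Cauchy data. So I would keep the write-up to two or three sentences, emphasizing the two points above (applying $\rho$ to get $Q[\phi]=0$, and the trivial vanishing of derivatives of the identically-zero quantity $E[\phi]$), and explicitly pointing out that this is exactly the hypothesis needed to invoke Lemma~\ref{lem:propeq} in the intended direction.
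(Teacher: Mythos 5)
Your argument is correct and is exactly the (trivial) reasoning the paper has in mind --- indeed the paper states Lemma~\ref{lem:propeq-conv} as ``obvious'' and gives no written proof, and your two steps (linearity of $\rho$ gives $Q[\phi]=\rho[0]=0$, and $\psi=E[\phi]\equiv 0$ forces all its normal derivatives to vanish on $\Sigma$) are precisely what is being left implicit. No issues.
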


For an operator $E^\Sigma$ satisfying the second part of
Lemma~\ref{lem:propeq}, we call
\begin{equation}\label{eq:p-initial-data}
	E^\Sigma[\phi_0, \ldots, \phi_{l-1}]=0
\end{equation}
a set of \emph{$E$-initial data conditions} or a \emph{$E$-initial data
system}. Clearly, the operator $E^\Sigma$ is not uniquely fixed. For
instance, its components may contain many redundant equations. Thus, in
practice, once some $P$-initial data conditions have been obtained, they
will be significantly simplified by eliminating as many higher order (in
spatial derivatives) terms as possible. Also, when some of the
components of $E^\Sigma[\phi_0,\ldots,\phi_{l-1}]=0$ can be used to
directly solve for one of the arguments, say $\phi_{l-1}$, in terms of
the remaining ones, we can split the initial data system into
(a) $\phi_{l-1}=\cdots$ and (b) a system involving only the remaining arguments,
$E^{\prime\Sigma}[\phi_0,\ldots,\phi_{l-2}]=0$. When presenting an
initial data system, we will omit from $E^\Sigma$ those components that
can be rewritten as type (a) and only write the remaining components of
type (b), reduced to the smallest convenient set of arguments. Of
course, the derivation of the initial data system will provide the
information about how all type (a) components can be recovered.

There is a limited set of known examples of propagation identities for
geometrically motivated equations $E[\phi] = 0$ in Lorentzian (or
Riemannian) geometry. The most prominent example concerns the Killing
equation in any spacetime dimension (examined in detail in
Section~\ref{sec:kid})~\cite{beig-chrusciel}. The list of known examples
is then exhausted by the 4-spacetime dimensional Killing spinor
equations of valences $(1,0)$, $(0,1)$, $(2,0)$ and
$(0,2)$~\cite{GOMEZLOBO2008, gasperin-williams}. The propagation
identity ostensibly obtained for the homothetic Killing vector equation
in~\cite{berger} was not not formally checked for well-posedness. We
close this small gap in Section~\ref{sec:hkid}, where we check the
well-posedness of our propagation identity for this equation.

\medskip

Normally-hyperbolic equations~\cite{bgp}, mentioned earlier in this
section, are a large and easy to recognize class of propagation
equations. One need only check that the principal symbol of $Q[\phi] =
0$ coincides with that of the wave operator $\square$, possibly tensored
with the identity endomorphism of the vector bundle where the field
$\phi$ takes its values. However, there exist second order operators $Q$ whose
highest order terms consist of more than just $\square$, yet are closely
tied to the normally-hyperbolic class.

We will call \emph{generalized normally-hyperbolic} any operator $Q$ (of
order $l$) that is determined (acts between vector bundles of equal
rank) and for which there exists an operator $Q'$ (or order $2m-l$,
$m\ge 1$) such that
\begin{equation}
	N[\phi] := Q'[Q[\phi]] = \square^m \phi + \text{l.o.t} ,
\end{equation}
where l.o.t stands for term of differential order lower than $2m$.
That is, the principal symbol of $N[\phi]$ is a power of the wave
operator and hence $N$ is normally-hyperbolic.%
	\footnote{While~\cite{bgp} only treats \emph{second order}
	normally-hyperbolic equations, any such \emph{higher order} equation
	can be order reduced to a second order normally-hyperbolic
	\emph{system} of equations, which \emph{are} treated in~\cite{bgp}.} %
Generalized normally-hyperbolic operators will appear as propagation
operators in the study of the conformal Killing equation. Thus, for
later convenience, imitating the treatment of the Dirac operator
in~\cite{bgp, baer-greenhyp}, we establish the following
\begin{lemma} \label{lem:gen-normhyp}
Any generalized normally-hyperbolic operator has a well-posed initial
value problem.
\end{lemma}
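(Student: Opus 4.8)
The plan is to reduce the claim to the well-posed Cauchy problem of two genuinely normally-hyperbolic equations, $N[\phi]=0$ and $\tilde N[\psi]:=Q[Q'[\psi]]=0$, which on a globally hyperbolic $M$ is supplied by~\cite{bgp} together with the order reduction noted in the footnote above. First I would record two facts about principal symbols, writing $\sigma_A(\xi)$ for the principal symbol of an operator $A$ at a covector $\xi$. Comparing the top-order ($2m$) parts of the assumed identity $Q'[Q[\phi]]=\square^m\phi+\text{l.o.t.}$ gives $\sigma_{Q'}(\xi)\,\sigma_Q(\xi)=(g^{ab}\xi_a\xi_b)^m\,\id$. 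Since $Q$ is determined, $\sigma_Q(\xi)$ and $\sigma_{Q'}(\xi)$ are square matrices of equal size, so for every non-null covector $\xi$ both are invertible (their product being so) and $\sigma_Q(\xi)^{-1}=(g^{ab}\xi_a\xi_b)^{-m}\,\sigma_{Q'}(\xi)$; in particular the unit timelike normal $n^a$ of $\Sigma$ is non-characteristic for $Q$ and for $Q'$, so each admits a Cauchy-Kovalevskaya form along $\Sigma$, say $\nabla_n^l\phi=B[\phi]$ and $\nabla_n^{2m-l}\psi=B'[\psi]$ with $B,B'$ of strictly lower order in $\nabla_n$. Secondly, for non-null $\xi$ one has $\sigma_{QQ'}(\xi)=\sigma_Q(\xi)\,\sigma_{Q'}(\xi)=(g^{ab}\xi_a\xi_b)^m\,\sigma_Q(\xi)\,\sigma_Q(\xi)^{-1}=(g^{ab}\xi_a\xi_b)^m\,\id$, and as both sides depend polynomially on $\xi$ this holds for all $\xi$; hence $\tilde N=QQ'=\square^m+\text{l.o.t.}$ is normally-hyperbolic too.

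For uniqueness I would start from $Q[\phi]=0$ with $\phi|_\Sigma=\cdots=\nabla_n^{l-1}\phi|_\Sigma=0$. Applying $Q'$ gives $N[\phi]=Q'[Q[\phi]]=0$, and repeatedly differentiating the Cauchy-Kovalevskaya relation $\nabla_n^l\phi=B[\phi]$ along $n$ and restricting to $\Sigma$ shows that every $\nabla_n^j\phi|_\Sigma$ is a spatial differential expression in the $\nabla_n^{<l}\phi|_\Sigma$, hence vanishes. So $\phi$ solves the normally-hyperbolic equation $N[\phi]=0$ with zero Cauchy data, whence $\phi\equiv 0$.

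For existence, given arbitrary smooth data $\phi_0,\ldots,\phi_{l-1}$ on $\Sigma$, I would first use the Cauchy-Kovalevskaya form of $Q$ to define $\phi_l,\ldots,\phi_{2m-1}$ recursively so that, for any field carrying these $2m$ Cauchy data, $\nabla_n^j(Q[\phi])|_\Sigma=0$ for $0\le j<2m-l$; this is possible because $\nabla_n^j(Q[\phi])|_\Sigma$ depends on the top datum $\nabla_n^{l+j}\phi|_\Sigma$ only through the invertible $\sigma_Q(n)$ and otherwise on lower data. Let $\phi$ be the unique solution of $N[\phi]=0$ with Cauchy data $(\phi_0,\ldots,\phi_{2m-1})$ and set $\psi:=Q[\phi]$, so $\nabla_n^j\psi|_\Sigma=0$ for $0\le j<2m-l$. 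Since $Q'[\psi]=N[\phi]=0$ on $M$, the relation $\nabla_n^{2m-l}\psi=B'[\psi]$ holds, and its right-hand side having order at most $2m-l-1$ in $\nabla_n$, restriction to $\Sigma$ and further differentiation give $\nabla_n^j\psi|_\Sigma=0$ for all $j$. Finally $\tilde N[\psi]=Q[Q'[\psi]]=Q[0]=0$, so $\psi$ solves a normally-hyperbolic equation with vanishing Cauchy data; hence $\psi=Q[\phi]\equiv 0$, and $\phi$, being a solution of $N[\phi]=0$ with the chosen data, solves $Q[\phi]=0$ and carries the prescribed initial data $\nabla_n^j\phi|_\Sigma=\phi_j$ for $0\le j\le l-1$. (When $l=2m$ the operator $Q'$ is an invertible bundle map and $Q[\phi]=0$ reduces outright to $N[\phi]=0$; the ranges above are then empty.)

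The step I expect to be the crux is the second symbol computation, namely that $QQ'$ (and not just the given $Q'Q$) is normally-hyperbolic, since this is exactly what lets me propagate the vanishing of $\psi=Q[\phi]$ away from $\Sigma$: a field all of whose normal derivatives vanish on $\Sigma$ need not vanish elsewhere absent a hyperbolic equation behind it. Everything else is bookkeeping; the one thing I would check carefully is that the index $2m-1$ at which the freely prescribable Cauchy data for $N$ run out matches the order $2m-l$ of the Cauchy-Kovalevskaya form of $Q'$ exactly, so that the induction propagating $\nabla_n^j\psi|_\Sigma=0$ off $\Sigma$ leaves no gap.
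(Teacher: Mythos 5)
Your proposal is correct and follows essentially the same route as the paper's proof: invertibility of $\sigma_Q$ at non-null covectors (using that $Q$ is determined) gives the Cauchy--Kovalevskaya form, the data are extended to Cauchy data for $N=Q'Q$, and the resulting solution is shown to satisfy $Q[\phi]=0$ by observing that $\psi=Q[\phi]$ solves the normally-hyperbolic equation $QQ'[\psi]=0$ with vanishing Cauchy data. The only cosmetic difference is that you propagate the vanishing of $\nabla_n^j\psi|_\Sigma$ via the Cauchy--Kovalevskaya form of $Q'$ applied to $Q'[\psi]=0$, where the paper phrases the same consistency check as the agreement of two ways of computing the higher normal derivatives of $\phi$.
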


\begin{proof}
Consider $(M,g)$ to be a globally hyperbolic Lorentzian manifold, with a
Cauchy surface $\Sigma \subset M$ with unit timelike normal $n^a$.
Suppose $Q$ is of order $l$ and generalized normally-hyperbolic, with
$Q'$ of order $2m-l$ such that $N := Q'\circ Q = \square^m +
\text{l.o.t}$, as in the definition. As we have noted already, $N$ is
normally-hyperbolic and hence has a well-posed initial value problem of
order $2m$.

There is an immediate consequence of the existence of such an operator
$Q'$. Namely, because of the condition on the differential orders of all
the operators, we know that $\sigma_p(N) = \sigma_p(Q') \sigma_p(Q)$,
where $\sigma_p(-)$ denotes the principal symbol of an operator (a
vector bundle morphism valued function on the cotangent bundle $T^*M \ni
p$). Now, because the principal symbol of $N$ coincides with that of
$\square^m$, we know that $\sigma_p(N)$ is invertible everywhere except
at null covectors $p \in T^*M$. Therefore, $\sigma_p(Q)$ is invertible
wherever $\sigma_p(N)$ is, in particular whenever $p$ is non-null, because
\begin{equation} \label{eq:symb-inv}
	\left[\sigma_p(N)^{-1} \sigma_p(Q')\right] \sigma_p(Q) = \id
	\quad \text{and} \quad
	\sigma_p(Q) \left[\sigma_p(N)^{-1} \sigma_p(Q')\right] = \id ,
\end{equation}
where the second equality holds because $Q$ is determined (its principal
symbol is a square matrix in components). Thus, the operator $Q$
may be expanded as
\begin{equation}
	Q[\phi] = \sigma_n(Q) (\nabla_n^l\phi) + \text{l.o.t}_n ,
\end{equation}
where l.o.t$_n$ stands for terms of lower differential order in normal
derivatives $\nabla_n$, and the notation $\sigma_n(-)$ stands for the
principal symbol evaluated specifically at the covector $n_a$, which is
non-null, being orthogonal to $\Sigma$. 

This means that, $Q[\phi]=0$ may be put into Cauchy-Kovalevskaya form.
In other words, whenever $Q[\phi] = 0$, we can write the normal
derivative $\nabla_n^l {\phi|_\Sigma}$ as a \emph{linear local} (meaning
as a purely spatial differential operator on $\Sigma$) expression of the
lower order normal derivatives, $\phi_0 = {\phi|_\Sigma}, \ldots,
\phi_{l-1} = \nabla_n^{l-1}{\phi|_\Sigma}$. Which means that, after
applying $\nabla_n$ multiple times to that relation, all normal
derivatives also up to $\nabla_n^{2m-1}\phi$ can be linearly locally
expressed in terms of $\phi_0, \ldots, \phi_{l-1}$ as well. In other
words, the initial data for $Q[\phi]=0$ uniquely determine the initial
data for $N[\phi]=0$, while the latter equation produces a unique
solution $\phi$ with those initial data. It remains to check that
$Q[\phi]=0$ is actually satisfied by this $\phi$. But to that end, we
need only apply Lemma~\ref{eq:propeq} to the obvious identity
\begin{equation}
	Q[N[\phi]] = Q[Q'[Q[\phi]]] = N'[Q[\phi]] ,
\end{equation}
where $N' := Q\circ Q'$ and we know that $N' = \square^m +
\text{l.o.t}$, as a consequence of the second equality
in~\eqref{eq:symb-inv}. As a technicality, we need to check that the
solution of $N[\phi]=0$ with the initial data constructed earlier gives
$\psi = Q[\phi]$ with vanishing initial data for $N'[\psi]=0$, namely
${\nabla_n^{2m-1}\psi|_\Sigma}=0, \ldots, {\psi|_\Sigma}=0$. For that to
be true, we just need to show that the two ways of solving for the
higher order derivatives $\nabla_n^{k+2m}\phi$ actually agree, namely
from solving $\nabla_n^k N[\phi]=0$ or $\nabla_n^{k+2m-l} Q[\phi]=0$ on
$\Sigma$. But applying to $Q'$ the argument from the first part of the
proof, we know that $Q'[\psi]=0$ can also be put in Cauchy-Kovalevskaya
form, so that by the identity
\begin{equation}
	\nabla_n^k N[\phi]
	= \nabla_n^k Q'[Q[\phi]]
	= \sigma_n(Q') (\nabla_n^{k+2m-l} Q[\phi] + \text{l.o.t}_n) ,
\end{equation}
the two ways are indeed equivalent.

Hence, since arbitrary initial data of order $l$ determine a unique
solution to $Q[\phi]=0$, this equation has a well-posed initial value
problem of order $l$.
\end{proof}

\subsection{Example: Killing initial data} \label{sec:kid}

The canonical illustration of Lemma~\ref{lem:propeq} is the case of the
\emph{Killing equation}~\cite{beig-chrusciel},
\begin{dgroup*}
\begin{dmath} \label{eq:k}
	\K_{ab}[v] \hiderel{=} \nabla_a v_b + \nabla_b v_a = 0
	\condition[]{$(E[\phi] = 0)$} .
\end{dmath}
\intertext{The corresponding propagation equations are}
\begin{dmath} \label{eq:k-v-propeq}
	\square v_a + R_a{}^b v_b = 0
	\condition[]{$(Q[\phi] = 0)$} ,
\end{dmath}
\begin{dmath} \label{eq:k-h-propeq}
	\square h_{ab} - 2R^c{}_{ab}{}^d h_{cd} = 0
	\condition[]{$(P[\psi] = 0)$} ,
\end{dmath}
\end{dgroup*}
where $h_{ab}$ is considered to be symmetric, while the propagation
identities~\eqref{eq:propeq} and~\eqref{eq:propeq-conv} take the form 
\begin{dgroup*}
\begin{dmath} \label{eq:k-propeq}
	\square \K_{ab}[v] - 2R^c{}_{ab}{}^d \K_{cd}[v]
		= \K_{ab}[\square v + R\cdot v]
		+ 2 R_{(a}{}^c \K_{b)c}[v]- 2 \mathcal{L}_{v} R_{ab} , \\
	{}\hspace{7em} (P[E[\phi]] \hiderel{=} \sigma[Q[\phi]] + \tau[G])
\end{dmath}
\begin{dmath} \label{eq:k-propeq-conv}
	\square v_a + R_a{}^b v_b
		= \nabla^b \K_{ab}[v] - \frac{1}{2} \nabla_a \K^b{}_b[v] , \\
	{}\hspace{7em} (Q[\phi] \hiderel{=} \rho[E[\phi]])
\end{dmath}
\end{dgroup*}
where we denoted $(R\cdot v)_a = R_a{}^b v_b$ and ${\mathcal L}_v R_{ab}
= v^c \nabla_c R_{a b} + 2R_{c (a}\nabla_{b)}v^c$ is the Lie derivative
of $R_{ab}$ with respect to the vector field $v$.

To obtain the $\K$-initial data conditions, or more commonly the
\emph{Killing initial data (KID)} conditions, we must first introduce a
space-time split around a Cauchy surface $\Sigma \subset M$, $\dim M =
n$ and $\dim \Sigma = n-1$. Let us use Gaussian normal coordinates to
set up a codimension-$1$ foliation on an open neighborhood $U \supset
\Sigma$ by level sets of a smooth temporal function $t\colon U\to
\mathbb{R}$, of which $\Sigma = \{ t = 0 \}$ is the zero level set.
Choose $t$ such that $n_a = \nabla_a t$ is a unit normal to the level
sets of $t$. Let us identify tensors on $\Sigma$ by upper case Latin
indices $A, B, C, \ldots$, denote the pullback of the ambient metric to
$\Sigma$ by $g_{AB}$ and its inverse by $g^{AB}$, and also denote by
$h^a_A$ the injection $T_\Sigma\to TM$ induced by the foliation. Raising
and lowering the respective indices on $h^a_A$ with $g_{ab}$ and
$g_{AB}$, we get the corresponding injections and orthogonal projections
between $T\Sigma$, $T^*\Sigma$, $TM$ and $T^*M$. In our notation, all
covariant and contravariant tensors split according to
\begin{equation}
	v_a = v_0 n_a + h_a^A v_A, \quad
	u^b = -u^0 n^b + h^b_B u^B ,
\end{equation}
which we also denote by
\begin{equation}
	v_a \to \begin{bmatrix} v_0 \\ v_A \end{bmatrix} , \quad
	u^b \to \begin{bmatrix} u^0 \\ u^B \end{bmatrix} .
\end{equation}
Thus, in our convention, the ambient metric splits as
\begin{equation}
	g_{a b} \to \begin{bmatrix} -1 & 0 \\ 0 & g_{A B} \end{bmatrix} .
\end{equation}
Let $D_A$ denote the Levi-Civita connection on $(\Sigma,g_{AB})$,
depending on the foliation time $t$ of course, and let $\del_t =
\mathcal{L}_{-n}$ denote the Lie derivative with respect to the
future-pointing normal vector $-n^a$. The action of $\del_t$ extends to
$t$-dependent tensors on $\Sigma$ in the natural way. The
($t$-dependent) extrinsic curvature on $\Sigma$ is then defined by
\begin{equation}
	\pi_{A B} = \frac{1}{2} \del_t g_{A B}
\end{equation}
and the ambient spacetime connection decomposes as
\begin{equation}
	\nabla_a v_b
	\to \begin{bmatrix} \nabla_0 v_b \\ \nabla_A v_b \end{bmatrix} ,
\end{equation}
where 
\begin{align}
	\nabla_0 v_a \to
		\begin{bmatrix}
			\nabla_0 v_0 \\ \nabla_0 v_A
		\end{bmatrix}
	&= \begin{bmatrix}
			\del_t & 0 \\
			0 & \del_t \delta_A^B - \pi_{A}{}^{B}
		\end{bmatrix}
		\begin{bmatrix}
			v_0 \\ v_B
		\end{bmatrix} ,
	\\
	\nabla_A v_b \to
		\begin{bmatrix}
			\nabla_A v_0 \\ \nabla_A v_B
		\end{bmatrix}
	&= \begin{bmatrix}
			D_A & -\pi_A{}^C \\
			-\pi_{A B} & D_A \delta_B^C
		\end{bmatrix}
		\begin{bmatrix}
			v_0 \\ v_C
		\end{bmatrix} .
\end{align}
The ambient vacuum Einstein equations $R_{ab} = 0$ decompose as
\begin{equation} \label{eq:ricci-split}
	R_{ab} \to
	\begin{bmatrix}
		-\nabla_0 \pi_{C}{}^{C} - \pi \cdot \pi
			& D^{C} \pi_{C B} - D_{B} \pi_{C}{}^{C} \\
		D^{C} \pi_{C A} - D_{A} \pi
			& \nabla_0 \pi_{AB} + \pi \pi_{AB} + r_{AB}
	\end{bmatrix} = 0 ,
\end{equation}
where now $r_{AB}$ is the Ricci tensor of $g_{AB}$ on $\Sigma$, $\pi =
\pi_C{}^C$, $(\pi\cdot \pi)_{AB} = \pi_A{}^C \pi_{C B}$ and $\pi\cdot
\pi = (\pi\cdot \pi)_C{}^C$. Note that we have found it convenient to
use the $\nabla_0$ operator instead of $\del_t$, because of its
preservation of both the orthogonal splitting with respect to the
foliation and of the spatial metric, $\nabla_0 g_{AB} = \nabla_0 g^{AB}
= 0$. For convenience, we note the commutator 
\begin{equation}\label{eq:commutator}
	(\nabla_0 D_A - D_A \nabla_0) \begin{bmatrix} v_0 \\ v_B \end{bmatrix}
	= -\pi_A{}^C D_C \begin{bmatrix} v_0 \\ v_B \end{bmatrix}
	+ \begin{bmatrix}
			0 \\ (D^C\pi_{AB} - D_B\pi_A{}^C)
		\end{bmatrix} v_C \;.
\end{equation}
According to Lemma~\ref{lem:propeq} and the specific
identity~\eqref{eq:k-propeq}, the Killing equation $\K_{ab}[v] = 0$ is
satisfied when $v_a$ is any solution of~\eqref{eq:k-v-propeq} where both
\begin{dgroup}
\begin{dmath}
	\left.\K_{ab}[v]\right|_\Sigma
	\to \left.\begin{bmatrix}
			\K_{00}[v] & \K_{0B}[v] \\
			\K_{0A}[v] & \K_{AB}[v]
		\end{bmatrix}\right|_\Sigma \hiderel{=} 0
\end{dmath}
\begin{dmath}
	\text{and} \quad
	\left.\nabla_0\K_{ab}[v]\right|_\Sigma
	\to \left.\begin{bmatrix}
			\nabla_0\K_{00}[v] & \nabla_0\K_{0B}[v] \\
			\nabla_0\K_{0A}[v] & \nabla_0\K_{AB}[v]
		\end{bmatrix}\right|_\Sigma \hiderel{=} 0 .
\end{dmath}
\end{dgroup}
In more detail, these components are
\begin{dgroup}
\begin{dmath}
	\K_{00}[v] = 2 \nabla_0 v_0 ,
\end{dmath}
\begin{dmath}
	\K_{0B}[v] = \nabla_0 v_B - \pi_{BC} v^C + D_B v_0 ,
\end{dmath}
\begin{dmath}
	\K_{AB}[v] = D_A v_B + D_B v_A - 2\pi_{AB} v_0 ,
\end{dmath}
\begin{dmath}
	\nabla_0 \K_{00}[v] = 2 \nabla_0\nabla_0 v_0 ,
\end{dmath}
\begin{dmath}
	\nabla_0 \K_{0B}[v] = \nabla_0\nabla_0 v_B - (\nabla_0 \pi_{BC}) v^C \\
		- \pi_{BC} \nabla_0 v_0
		+ D_B \nabla_0 v_0 - \pi_{BC} D^C v_0 ,
\end{dmath}
\begin{dmath}
	\nabla_0 \K_{AB}[v] = D_A \nabla_0 v_B + D_B \nabla_0 v_A
		+ 2(D^C \pi_{AB}) v_C
		- 2\pi_{C(A} D^C v_{B)}
		- 2D_{(A} \pi_{B)C} v^C
		- 2\pi_{AB} \nabla_0 v_0
		- 2(\nabla_0 \pi_{AB}) v_0 .
\end{dmath}
\end{dgroup}
On the other hand, the propagation equation~\eqref{eq:k-v-propeq}
splits (modulo $R_{ab} = 0$) as
\begin{dgroup}[indentstep=10em]
\begin{dmath} \label{eq:k-v-propeq-time}
	-\nabla_0\nabla_0 v_0
	+ D^{C}{D_{C}{v_{0}}}
	- 2 \pi^{BC} D_{B} v_{C}
	- \pi \nabla_0 v_{0} \\
	- (D_{B} \pi^{BC}) v_{C}
	+ (\pi\cdot\pi) v_{0}
	= 0 ,
\end{dmath}
\begin{dmath} \label{eq:k-v-propeq-space}
	-\nabla_0\nabla_0 v_{A}
	+ D^{C}{D_{C}{v_{A}}}
	- 2\pi_{A}{}^{B} D_{B} v_{0}
	- \pi \nabla_0 v_{A} \\
	- (D^{B}\pi_{A B}) v_{0}
	+ (\pi\cdot\pi)_{A C} v^{C}
	= 0 .
\end{dmath}

\end{dgroup}

Finally, eliminating the time derivatives of $v_0$ and $v_A$, while also
eliminating the time derivatives of $\pi_{AB}$ using the vacuum Einstein
equations~\eqref{eq:ricci-split}, we obtain the well-known Killing
initial data (KID) conditions:
\begin{dgroup}[indentstep=6em] \label{eq:kid}
\begin{dmath} \label{eq:kid0}
	D_A v_B + D_B v_A - 2\pi_{AB} v_0 = 0 ,
\end{dmath}
\begin{dmath} \label{eq:kid1}
	D_{A} D_{B} v_{0}
	+ (2(\pi\cdot\pi)_{A B} - \pi \pi_{AB} - r_{AB}) v_0 \\
	- 2 \pi_{(B}{}^{C} D_{A)} v_C
	- (D^{C}{\pi_{A B}}) v_{C} = 0 .
\end{dmath}
\end{dgroup}
For ease of comparison with the conformal and homothetic case in
Sections~\ref{sec:hkid} and~\ref{sec:ckid}, let us note the traces of
the above KID conditions:
\begin{dgroup} \label{eq:trkid}
\begin{dmath} \label{eq:trkid0}
	2 (D_C v^C - \pi v_0) = 0 ,
\end{dmath}
\begin{dmath}[indentstep=6em] \label{eq:trkid1}
	D^{C} D_{C} v_{0}
	- (\pi\cdot\pi) v_0
	- (D^{C} \pi) v_{C} \\
	- \pi^{EF} (D_E v_F + D_F v_E - 2\pi_{EF} v_0)
	= 0 ,
\end{dmath}
\end{dgroup}
where we have used the Hamiltonian constraint $r + \pi^2 - \pi\cdot\pi =
0$ from the vacuum Einstein equations to eliminate the spatial scalar
curvature $r$.

The propagation identity and Killing initial data equations recalled in
this section, coupled with Lemmas~\ref{eq:propeq}
and~\ref{lem:propeq-conv}, allow us in particular to identify those
initial data sets for the metric $g$ that give rise to a solution of the
Einstein vacuum equations equations with Killing symmetries. This was
the original motivation under which the Killing initial data
conditions~\eqref{eq:kid} were first identified~\cite{COLL77,
MONCRIEF-KID1, MONCRIEF-KID2, beig-chrusciel}. Under such
considerations, the propagation identity~\eqref{eq:k-propeq} could have
been simplified, by dropping any terms that vanish in vacuum. However,
similar condition have also been found to identify initial data sets
some non-vacuum solutions of Einstein equations with Killing
symmetries~\cite{racz-kid1, racz-kid2}, where the full propagation
identity~\eqref{eq:k-propeq} plays a crucial role.

\section{Homothetic Killing initial data} \label{sec:hkid}

Recall that \emph{homothetic Killing} vectors $v_a$ are those that
satisfy the \emph{conformal Killing} equation in addition to having a
constant divergence (together giving $E[\phi]=0$ in the notation of
Section~\ref{sec:propeq}), namely
\begin{dgroup*}
\begin{dmath} \label{eq:ck}
	\CK_{ab}[v] \coloneq \nabla_a v_b + \nabla_b v_a - \frac{2}{n} g_{ab} \nabla^c v_c
\end{dmath}
\begin{dmath} \label{eq:const-div}
	\nabla_a(\nabla^b v_b) = 0 .
\end{dmath}
\end{dgroup*}
The $\CK_{ab}[v]$ operator is simply the trace free part of the Killing
operator $\K_{ab}[v]$ from~\eqref{eq:k},
\begin{equation}
	\CK_{ab}[v] = \K_{ab}[v] - \frac{1}{n} g_{ab} \K_{c}{}^{c}[v] .
\end{equation}

The homothetic Killing vector equation is less restrictive than the
Killing equation itself, but is more restrictive than just the conformal
Killing vector equation. Homothetic Killing vectors are important in
relativity because of the interesting observation~\cite[Thm.3]{eimm-ckv}
that a conformal Killing vector on a 4-dimensional Einstein vacuum
spacetime (where $R_{ab}=0$) is almost always homothetic. This is a
purely local result, independent of the global features of a spacetime.
The only exceptions that admit proper conformal Killing vectors
(non-homothetic ones) are locally (a) flat spacetime or (b) a subclass of
type~N spacetimes. It is likely that analogous restrictions exist in
other dimensions and non-Lorentzian signatures.

Homothetic Killing vectors also play an essential role in the definition
of {\em (asymptotically) self-similar solutions} and therefore their
characterization from an initial data set point of view could be
relevant in the study of these solutions with applications to {\em
critical phenomena} in gravity (see~\cite{Gundlach2007} for more details
about the relation between self-similar solutions and critical
phenomena).

As was discussed in the Introduction, the existence of initial data
equations for homothetic Killing vector fields is a natural question.
It was first treated and essentially solved in the early
work~\cite{berger}, quickly following the seminal work~\cite{berezdivin,
MONCRIEF-KID1} work on Killing initial data. Though they did not exactly
use the same language as we did in Section~\ref{sec:propeq}, the early
references~\cite{berezdivin, MONCRIEF-KID1, berger} did derive
propagation identities, but did not provide sufficient conditions to show
that the corresponding equations have a well-posed initial value
problem. They did write the propagation equations in Cauchy-Kovalevskaya
form (solved for the highest order time derivatives), but that is
sufficient for well-posedness only for analytic initial data. In this
section, we reobtain the \emph{homothetic Killing initial data} (HKID)
equations from~\cite[Prop.2]{berger}. However, the way that we derive
propagation allows us to write them directly using wave operators,
making their well-posedness manifest (since they belong to the
well-studied normally-hyperbolic class mentioned in
Section~\ref{sec:propeq}). For Killing initial data this was first done
in~\cite{COLL77}, but to date does not appear to have been done
explicitly for the homothetic case.

The most straightforward way to obtain the relevant propagation
identity is to start with the analogous identity~\eqref{eq:k-propeq} for
the Killing equation and first take its trace,
\begin{equation} \label{eq:trk-propeq}
	\square \K_{c}{}^{c}[v] = \K_{c}{}^{c}[\square v + R\cdot v]
		- 2v^c\nabla_c R - 4 R^{cd} \nabla_c v_d ,
\end{equation}
and then the take gradient of that, noting the relation $\frac{1}{2}
\K_{c}{}^{c}[v] = \nabla^c v_c =: \delta v$ and commuting $\nabla_a$
with $\square$,
\begin{multline} \label{eq:dtrk-propeq}
	\square (\nabla_a \delta v)
	= \nabla_a \delta(\square v + R\cdot v) \\
		- \nabla_a (v^c\nabla_c R) - 2 \nabla_a (R^{cd} \nabla_c v_d)
		+ R_a{}^d \nabla_d \delta v .
\end{multline}
Next, using the identity $\K_{ab}[v] = \CK_{ab}[v] + \frac{1}{n} g_{ab}
\K_{c}{}^{c}[v]$ in~\eqref{eq:k-propeq} and simplifying the result with
the help of~\eqref{eq:trk-propeq}, we obtain the following propagation
identity for the $\CK_{ab}[v]$ operator
\begin{multline}\label{eq:hk-propeq}
	\square \CK_{ab}[v]
		- 2 R^c{}_{ab}{}^d \CK_{cd}[v] - 2 R_{(a}{}^c \CK_{b)c}[v] 
	= \CK_{ab}[\square v + R\cdot v] \\
	+ 2{\mathcal L}_v\left(\frac{R}{n} g_{ab}- R_{ab}\right)
	+ \frac{2}{n} \left(g_{ab}R^{cd}
	- R\delta_a{}^{(c}\delta_b{}^{d)} \right) \K_{cd}[v],
\end{multline}
which is manifestly traceless. Together, Equations~\eqref{eq:hk-propeq}
and~\eqref{eq:dtrk-propeq} make up the propagation identity
($P[E[\phi]] = \sigma[Q[\phi]] + \tau[G]$) for the homothetic Killing
vector equations. The corresponding propagation
equations are
\begin{dgroup*}
\begin{dmath}\label{eq:h-k-v-propeq}
	\square v_a + R_a{}^b v_b = 0
		\condition[]{$(Q[\phi] = 0)$} ,
\end{dmath}
\begin{dmath}\label{eq:h-k-v-propeq2}
	\begin{bmatrix}
	\square h_{ab}
		- 2 R^c{}_{ab}{}^d h_{cd} - 2 R_{(a}{}^c h_{b)c} \\
	\square w_a
	\end{bmatrix}
	= 0
		\condition[]{$(P[\psi] = 0)$} ,
\end{dmath}
\end{dgroup*}
where $h_{ab}$ is considered to be symmetric and traceless, and we must
note that 
\begin{equation} \label{eq:hkv-propeq-conv}
	\square v_a + R_a{}^b v_b
	= \nabla^b \CK_{ab}[v] - \frac{n-2}{n} \nabla_a \delta v
	\quad (Q[\phi] = \rho[E[\phi]]) .
\end{equation}

By performing an analysis similar to that of Section~\ref{sec:kid}, that
is, relying on Lemmas~\ref{lem:propeq} and~\ref{lem:propeq-conv} as well
as the propagation identities~\eqref{eq:hk-propeq},
\eqref{eq:dtrk-propeq} and~\eqref{eq:hkv-propeq-conv}, we can compute
the necessary and sufficient conditions which yield a homothetic Killing
initial data set (HKID) on $\Sigma$, namely any set of equations
equivalent to the following:
\begin{dgroup}
\begin{dmath}
	\left.\CK_{ab}[v]\right|_\Sigma
	\to \left.\begin{bmatrix}
			\CK_{00}[v] & \CK_{0B}[v] \\
			\CK_{0A}[v] & \CK_{AB}[v]
		\end{bmatrix}\right|_\Sigma \hiderel{=} 0 ,
\end{dmath}
\begin{dmath}
	\left.\nabla_0\CK_{ab}[v]\right|_\Sigma
	\to \left.\begin{bmatrix}
			\nabla_0\CK_{00}[v] & \nabla_0\CK_{0B}[v] \\
			\nabla_0\CK_{0A}[v] & \nabla_0\CK_{AB}[v]
		\end{bmatrix}\right|_\Sigma \hiderel{=} 0 ,
\end{dmath}
\begin{dmath}
	\left.\nabla_a \delta v\right|_\Sigma
	\to \left.\begin{bmatrix}
			\nabla_0 \delta v \\ D_A \delta v
		\end{bmatrix}\right|_\Sigma
	\hiderel{=} 0 ,
\end{dmath}
\begin{dmath}\label{eq:lasthkid}
	\text{and} \quad
	\left.\nabla_0 \nabla_a \delta v\right|_\Sigma
	\to \left.\begin{bmatrix}
			\nabla_0 \nabla_0 \delta v \\ \nabla_0 D_A \delta v
		\end{bmatrix}\right|_\Sigma
	\hiderel{=} 0 .
\end{dmath}
\end{dgroup}

For reference, some of the explicit components of the above operators are
\begin{dgroup}
\begin{dmath}\label{eq:hk00}
	\CK_{00}[v] = \frac{2}{n}
	\left[
	(n-1) \nabla_{0}v_{0}
	- \pi v_{0} + D_{A}v^{A}
	\right] ,
\end{dmath}
\begin{dmath}\label{eq:hk0B}
	\CK_{0B}[v] = \nabla_{0}v_{B}
		+ D_{B}v_{0} - \pi_{B}{}^{A} v_{A} ,
\end{dmath}
\begin{dmath}\label{eq:hkAB}
	\CK_{AB}[v] = D_{A}v_{B} + D_{B}v_{A} -2 \pi_{AB} v_{0} \\
		- \frac{2}{n}g_{AB}\left(-\nabla_{0}v_{0} - \pi v_{0} +
		D_{C}v^{C}\right),
\end{dmath}
\begin{dmath} \label{eq:ddv0}
	\nabla_0 \delta v = -\nabla_0 \nabla_0 v_0 - \nabla_0(\pi v_0) \\
		+ D^B \nabla_0 v_B - \pi^{BC} D_B v_C - R_{0 B} v^B ,
\end{dmath}
\begin{dmath} \label{eq:ddvA}
	D_A \delta v = D_A(-\nabla_0 v_0 - \pi v_0 + D_B v^B) ,
\end{dmath}
\end{dgroup}
where we recall that $R_{0B} = D^{C}\pi_{CB} - D_{B}\pi$.

\medskip
\noindent
We are now ready to formulate the main result of this section:
\begin{theorem}\label{thm:hkid}
Consider an $n$-dimensional globally hyperbolic Einstein vacuum
Lorentzian manifold, $(M,g)$ with $R_{ab}=0$, and a Cauchy surface
$\Sigma \subset M$. For $n>2$, the necessary and sufficient conditions
yielding a set of \emph{homothetic Killing initial data} (HKID) for
$v_a$ on $\Sigma$ are given by the following equations:
 
\begin{dgroup} \label{eq:hkid}
\begin{dmath} \label{eq:hkid0}
	D_{A}v_{B} + D_{B}v_{A}
	-2 \pi_{AB} v_{0}
	- \frac{2g_{AB}}{n-1}(D_{C}v^{C} - \pi v_{0}) 
	= 0 \;,
\end{dmath}
\begin{dmath}\label{eq:hkid1}
	D_{A}D_{B}v_{0}
	+ \left(2 (\pi\cdot\pi)_{AB} - \pi \pi_{AB} - r_{AB}\right) v_{0}
	- 2\pi_{C(A} D_{B)}v^{C}
	- v^{C} D_{C}\pi_{AB}
	+ \frac{\pi_{AB}}{n-1}(D_{C}v^{C}-v_0\pi)
	= 0 \;,
\end{dmath}
\begin{dmath}\label{eq:hkid2}
	D_B(D_Av^A-\pi v_0)=0 \;.
\end{dmath}
\end{dgroup}
\end{theorem}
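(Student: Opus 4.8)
The plan is to combine the general machinery of Lemma~\ref{lem:propeq} (for the necessity direction, via the propagation identities~\eqref{eq:hk-propeq} and~\eqref{eq:dtrk-propeq}) and Lemma~\ref{lem:propeq-conv} (for the sufficiency direction, via~\eqref{eq:hkv-propeq-conv}) with the explicit space-time split introduced in Section~\ref{sec:kid}. The operator $E[\phi]=0$ here is the pair $(\CK_{ab}[v],\nabla_a\delta v)$, the propagation operator $Q[\phi]=0$ is the wave system~\eqref{eq:h-k-v-propeq}, and $P[\psi]=0$ is~\eqref{eq:h-k-v-propeq2}, which is normally-hyperbolic (so Lemma~\ref{lem:propeq} applies with no extra well-posedness check needed). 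Thus, by Lemma~\ref{lem:propeq}, a solution $v_a$ of~\eqref{eq:h-k-v-propeq} is a homothetic Killing vector if and only if the initial data for $\psi=(\CK_{ab}[v],\nabla_a\delta v)$ vanish on $\Sigma$, i.e.\ the four displayed conditions on $\CK_{ab}[v]|_\Sigma$, $\nabla_0\CK_{ab}[v]|_\Sigma$, $\nabla_a\delta v|_\Sigma$, $\nabla_0\nabla_a\delta v|_\Sigma$ hold; conversely, by Lemma~\ref{lem:propeq-conv} and~\eqref{eq:hkv-propeq-conv}, any bulk homothetic Killing vector solves~\eqref{eq:h-k-v-propeq} with exactly those vanishing initial data. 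So the content of the theorem is the reduction of these four tensor conditions, using the order-reduction step of the second part of Lemma~\ref{lem:propeq}, to the three spatial equations~\eqref{eq:hkid0}--\eqref{eq:hkid2}.

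The concrete steps are as follows. First I would write out all components of the four conditions using the connection-splitting formulas and the components~\eqref{eq:hk00}--\eqref{eq:ddvA}: this gives $\CK_{00}$, $\CK_{0B}$, $\CK_{AB}$, their $\nabla_0$-derivatives, and $\nabla_0\delta v$, $D_A\delta v$, $\nabla_0\nabla_0\delta v$, $\nabla_0 D_A\delta v$ — a redundant over-determined list. Second, I would use the propagation equation~\eqref{eq:h-k-v-propeq} in its Cauchy-Kovalevskaya (split) form — the homothetic analogue of~\eqref{eq:k-v-propeq-time}--\eqref{eq:k-v-propeq-space} — to eliminate every occurrence of $\nabla_0\nabla_0 v_0$ and $\nabla_0\nabla_0 v_A$ in favour of spatial derivatives of $v_0,v_A$ and first normal derivatives $\nabla_0 v_0,\nabla_0 v_A$. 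Third, I would use the split vacuum Einstein equations~\eqref{eq:ricci-split} to eliminate all $\nabla_0\pi_{AB}$ terms (and the Hamiltonian constraint to trade $r$ for $\pi^2-\pi\cdot\pi$ as in~\eqref{eq:trkid1}). At this stage the conditions involve only $v_0$, $v_A$, $\nabla_0 v_0$, $\nabla_0 v_A$, and spatial quantities. Fourth — the key simplification — I would note that $\CK_{00}[v]|_\Sigma=0$ and $\CK_{0B}[v]|_\Sigma=0$ can be solved algebraically for $\nabla_0 v_0$ and $\nabla_0 v_B$ (these are the "type (a)" components in the terminology following Lemma~\ref{lem:propeq}), namely $\nabla_0 v_0 = \frac{1}{n-1}(\pi v_0 - D_A v^A)$ and $\nabla_0 v_B = \pi_B{}^A v_A - D_B v_0$; substituting these everywhere removes all remaining normal derivatives of $v$. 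Fifth, I would check that after these substitutions the surviving independent content is precisely~\eqref{eq:hkid0} (coming from $\CK_{AB}[v]|_\Sigma=0$ after eliminating $\nabla_0 v_0$), \eqref{eq:hkid1} (coming from $\nabla_0\CK_{AB}[v]|_\Sigma=0$, mirroring the KID computation~\eqref{eq:kid1} but now with the trace-free correction producing the $\frac{\pi_{AB}}{n-1}(D_Cv^C-v_0\pi)$ term), and~\eqref{eq:hkid2} (coming from $D_A\delta v|_\Sigma=0$), and that all other components — $\nabla_0\CK_{00}$, $\nabla_0\CK_{0B}$, $\nabla_0\delta v$, $\nabla_0 D_A\delta v$, and $\nabla_0\nabla_0\delta v$ — are consequences of these three together with the spatial derivatives of the "type (a)" relations and the constraints.

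The main obstacle, as usual in these derivations, is the bookkeeping in the fifth step: showing that the redundant components really do collapse to just~\eqref{eq:hkid0}--\eqref{eq:hkid2} without leaving behind any extra independent constraint. In particular one must verify that $\nabla_0\delta v|_\Sigma=0$ is automatically implied: using~\eqref{eq:ddv0}, substituting the eliminated $\nabla_0\nabla_0 v_0$ (via the time component of the propagation equation) and the "type (a)" relations for $\nabla_0 v_B$, this expression should reduce to a spatial identity that follows from~\eqref{eq:hkid2} (indeed $\delta v|_\Sigma = -\nabla_0 v_0 - \pi v_0 + D_Bv^B$ evaluated with the type-(a) relation gives $\delta v|_\Sigma = \frac{n}{n-1}(D_Bv^B - \pi v_0)$, whose spatial gradient is~\eqref{eq:hkid2} up to a factor, and whose normal derivative is then controlled). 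Likewise $\nabla_0\nabla_0\delta v|_\Sigma=0$ must be shown to follow from $\nabla_a\delta v|_\Sigma=0$ and $\nabla_0\nabla_a\delta v|_\Sigma=0$ together with the propagation equation — essentially because $\delta v$ itself satisfies a scalar wave equation as a consequence of~\eqref{eq:h-k-v-propeq}, so its second normal derivative on $\Sigma$ is a spatial expression in lower-order data. I expect this cross-checking to be the only genuinely laborious part; everything else is a direct transcription of the KID computation of Section~\ref{sec:kid} with the extra trace-free and divergence ingredients, and the final comparison with~\cite[Prop.2]{berger} to confirm agreement.
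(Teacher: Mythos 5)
Your overall strategy coincides with the paper's: invoke Lemmas~\ref{lem:propeq} and~\ref{lem:propeq-conv} with the propagation identities \eqref{eq:hk-propeq}, \eqref{eq:dtrk-propeq}, \eqref{eq:hkv-propeq-conv}, split everything along $\Sigma$, use $\CK_{00}[v]$ and $\CK_{0B}[v]$ as ``type (a)'' relations to eliminate $\nabla_0 v_0$ and $\nabla_0 v_B$, and then show that the surviving constraints collapse to \eqref{eq:hkid0}--\eqref{eq:hkid2}, with $\nabla_0 D_A\delta v$ and $\nabla_0\nabla_0\delta v$ redundant because $\delta v$ itself obeys a wave equation. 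The only structural difference is that you eliminate the second normal derivatives via the split propagation equation, whereas the paper uses $\nabla_0\CK_{00}[v]$ and $\nabla_0\CK_{0B}[v]$ for that purpose; this is an equivalent bookkeeping choice.

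There is, however, one genuine gap in your fifth step and the subsequent ``cross-checking''. You assert that \eqref{eq:hkid1} comes from $\nabla_0\CK_{AB}[v]|_\Sigma=0$ alone and that $\nabla_0\delta v|_\Sigma=0$ is then ``automatically implied'' by \eqref{eq:hkid2}. Neither claim holds. Since $g^{ab}\CK_{ab}[v]=0$, one has $g^{AB}\nabla_0\CK_{AB}[v]=\nabla_0\CK_{00}[v]$, so modulo the conditions already imposed the component $\nabla_0\CK_{AB}[v]$ is spatially \emph{traceless} and can only produce the traceless part of \eqref{eq:hkid1}; the term $\frac{\pi_{AB}}{n-1}(D_Cv^C-\pi v_0)$ is not a ``trace-free correction''. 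The missing scalar is exactly $\nabla_0\delta v|_\Sigma=0$, which after the eliminations contains $-D^AD_Av_0+\cdots$ (see~\eqref{eq:ddv0}) and is therefore second order in $v_0$; it cannot follow from \eqref{eq:hkid2}, which is only first order in $v_0$. The paper's proof handles this by deliberately \emph{adding} a trace component proportional to $\nabla_0\delta v$ to the traceless equation coming from $\nabla_0\CK_{AB}[v]$, so that the single symmetric-tensor equation \eqref{eq:hkid1} encodes both conditions (its traceless part and its trace separately). As written, your argument either drops the independent constraint $\nabla_0\delta v|_\Sigma=0$ or derives \eqref{eq:hkid1} from a component that cannot yield its trace; you need to account for this scalar condition explicitly, e.g.\ as the trace of \eqref{eq:hkid1}. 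The remaining redundancy checks ($\nabla_0 D_A\delta v$ and $\nabla_0\nabla_0\delta v$) are correctly identified and match the identities given in the paper's proof.
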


\begin{proof}
The proof is straightforward by direct calculation. The derivatives
$\nabla_0 v_0$, $\nabla_0 v_B$, $\nabla_0\nabla_0 v_0$ and
$\nabla_0\nabla_0 v_B$ are eliminated respectively by $\CK_{00}[v]$,
$\CK_{0B}[v]$, $\nabla_0 \CK_{00}[v]$ and $\nabla_0 \CK_{0B}[v]$. Thus,
the components $\CK_{AB}[v]$ and $D_B \delta v$ respectively lead to the
desired initial data equations~\eqref{eq:hkid0} and~\eqref{eq:hkid2}.
Note that equation~\eqref{eq:hkid0} is manifestly traceless. Thus, we
expect the vanishing $\nabla_0 \CK_{AB}[v]$ to contribute another
traceless equation. It turns out to be convenient to add to it a trace
component proportional to $\nabla_0 \delta v$, thus leading to the last
independent initial data equation~\eqref{eq:hkid1}. The remaining
initial data equations are not independent because of the identities
\begin{dgroup}
\begin{dmath}
	\nabla_0 D_A \delta v
	= D_A (\nabla_0 \delta v) - \pi_A{}^B (D_B \delta v) ,
\end{dmath}
\begin{dmath}
	\nabla_0 \nabla_0 \delta v
	= D^A (D_A \delta v) - \pi (\nabla_0 \delta v)
		+ \nabla_0 (\square v)_0 - D^A (\square v)_A + \pi (\square v)_0 ,
\end{dmath}
\end{dgroup}
where the last identity follows from the splitting
of~\eqref{eq:trk-propeq}, with $(\square v)_0$ and $(\square v)_A$
themselves expressible as
\begin{dgroup}
\begin{dmath}
	(\square v)_0
	= -\nabla_0 \CK_{00}[v] + D^B \CK_{0B}[v] - \pi \CK_{00}[v]
		- \frac{n-2}{n} (\nabla_0 \delta v) ,
\end{dmath}
\begin{dmath}
	(\square v)_A
	= -\nabla_0 \CK_{0A}[v] + D^B \CK_{AB}[v] - \pi \CK_{0A}[v]
		- \pi_{A}{}^{B} \CK_{0B}[v]
		- \frac{n-2}{n} (\nabla_A \delta v) ,
\end{dmath}
\end{dgroup}
due to the splitting of~\eqref{eq:hkv-propeq-conv}.
\end{proof}

Note that a homothetic Killing vector $v_a$ is also a normal Killing
vector exactly when it is divergence free, $\delta v = 0$.
Eliminating $\nabla_0$ derivatives, as in the proof of the theorem,
the divergence can be written as
\begin{dmath} \label{eq:dv}
	\delta v = \nabla_a v^a \hiderel{=} -\nabla_0 v_0 - \pi v_0 + D_A v^A
		= \frac{n}{n-1} \left(D_A v^A - \pi v_0\right)
			+ \frac{n}{2(n-1)} \CK_{00}[v] .
\end{dmath}
Note that we have written the HKID equations~\eqref{eq:hkid} in such a
way that when the spatial divergence free condition
\begin{equation}
	D_A v^A - \pi v_0 = 0
\end{equation}
is satisfied, the HKID equations manifestly reduce to the KID
equations~\eqref{eq:kid}.

\section{Conformal Killing initial data} \label{sec:ckid}

A \emph{conformal Killing} vector $v_a$ satisfies the equation
$\CK_{ab}[v]=0$, defined in~\eqref{eq:ck}. It is less restrictive than
either the Killing or the homothetic Killing equations, discussed in
Sections~\ref{sec:kid} and~\ref{sec:hkid}. As we have seen,
heuristically, the less restrictive the equation, the more complicated
the corresponding propagation identity and the corresponding initial
data equations (if they exist). That pattern will repeat in this
section, where we for the first time both prove the existence of a
propagation identity~\eqref{eq:ck-propeq4} and explicitly construct the
\emph{conformal Killing initial data} (CKID) equations
(Theorem~\ref{thm:ckid}). The structure of this section is modeled on
and uses notation from Sections~\ref{sec:kid} and~\ref{sec:hkid}.

The first attempt to construct the CKID equations (though without using
that terminology) was in the early paper~\cite{berger}, which quickly
followed the original work on the KID equations~\cite{berezdivin,
MONCRIEF-KID1}. However, the construction was not complete and only
obtained some necessary conditions on the initial data of
$v_a$~\cite[Sec.V]{berger}, but without deriving a set of initial data
conditions that could be sufficient. Unfortunately, the CKID problem
does not appear to have been seriously revisited since then. In
Theorem~\ref{thm:ckid}, we finally and for the first time give a complete
construction of the CKID equations (which are both necessary and
sufficient) on Einstein vacuum spacetimes. In retrospect, we can also
answer the following question: why was a full set of initial data
conditions not discovered already in~\cite{berger}? The answer is
simple. The strategy in~\cite{berger} was to split the components
of the conformal Killing equation into evolution equations and spatial
constraint equations, and then take time derivatives of the latter
generating further spatial constraints (modulo the evolution equations),
until hopefully after a certain number of time derivatives no new
spatial constraint equations would be generated, giving an analog of
what we called a propagation identity. The existence of our fourth-order
propagation identity~\eqref{eq:ck-propeq4} implies that this strategy
would have succeeded after four time differentiations (cf.~the
discussion of Equations~\eqref{eq:ckaf-propeq-id} in the proof of
Theorem~\ref{thm:ckid}). Unfortunately, the calculations
in~\cite{berger} stopped at the third time derivative, just short of the
necessary differential order.

There are a couple of points at which the discussion below must deviate
from the parallel Killing case of~\ref{sec:kid}. In particular, to
simplify the various identities to appear below, we make the blanket
assumption that for the rest of this section we are dealing with an
Einstein vacuum background, satisfying $R_{ab}=0$.

The first problem is that the candidate propagation operator
\begin{equation} \label{eq:ck-propeq-conv}
	Q_a[v] := \square v_a + \frac{n-2}{n} \nabla_a \nabla^b v_b
	= \nabla^b \CK_{ab}[v]
	\quad (Q[\phi] = \rho[E[\phi]])
\end{equation}
is no longer normally-hyperbolic, as was the case for the Killing
equation, because in addition to the wave term $\square v_a$ also the
$\nabla_a \nabla^b v_b$ term contributes to the principal symbol.
Fortunately, $Q$ does belongs to the generalized normally-hyperbolic
class that we discussed at the end of Section~\ref{sec:propeq}, and
therefore $Q[\phi]=0$ is a propagation equation.

\begin{lemma} \label{lem:gennormhyp-vec}
Any operator of the form
\begin{equation}
	Q_{a}[v] = x\square v_{a} + (z-x) \nabla_a\nabla^b v_b ,
\end{equation}
where $x\ne 0$ and $z\ne 0$ is generalized normally-hyperbolic.
\end{lemma}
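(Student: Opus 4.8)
The plan is to verify the definition of ``generalized normally-hyperbolic'' head-on, by exhibiting an explicit companion operator $Q'$. First I would note that $Q$ is determined: it maps sections of $T^*M$ (equivalently $TM$) to sections of the same bundle, and it has order $l=2$. The natural ansatz is to seek $Q'$ of exactly the same structural form, $Q'_a[w] := x'\,\square w_a + (z'-x')\,\nabla_a\nabla^b w_b$, with constants $x',z'$ to be fixed, and to take $m=2$, so that $Q'$ has the required order $2m-l=2$ and $N := Q'\circ Q$ has order $2m=4$.

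Next I would compute the principal symbols. Writing $s := p^c p_c$ for the scalar principal symbol of $\square$ at a covector $p_a$ (the overall sign convention is immaterial, since $s$ occurs only squared), one has $\sigma_p(Q)_a{}^b = x\,s\,\delta_a^b + (z-x)\,p_a p^b$ and likewise $\sigma_p(Q')_a{}^b = x'\,s\,\delta_a^b + (z'-x')\,p_a p^b$. A direct multiplication of these two symbols, using $p^c p_c = s$, collapses to the clean expression $\sigma_p(Q'\circ Q)_a{}^b = x'x\,s^2\,\delta_a^b + (z'z - x'x)\,s\,p_a p^b$, which coincides with $\sigma_p(\square^2)_a{}^b = s^2\,\delta_a^b$ precisely when $x'x = 1$ and $z'z = 1$. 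Since $x\ne 0$ and $z\ne 0$ by hypothesis, I may set $x' := 1/x$ and $z' := 1/z$; then $N := Q'\circ Q = \square^2 + \text{l.o.t}$ is normally-hyperbolic, and by the definition at the end of Section~\ref{sec:propeq} this exhibits $Q$ as generalized normally-hyperbolic with $m=2$ and the $Q'$ just constructed.

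The computation is entirely routine, so I do not anticipate a genuine obstacle; the only points needing a little care are the bookkeeping of the $p_a p^b$ cross term in the symbol product (most transparently, both symbols are simultaneously diagonalized, acting as multiplication by $x s$ resp.\ $x's$ on the hyperplane $p^\perp$ and by $z s$ resp.\ $z's$ on the line through $p^\sharp$), and the trivial observation that the lower-order terms of $N$ — built from curvature and from commutators of covariant derivatives — are irrelevant, since only the principal symbol enters the definition. I would also remark in passing that $m=2$ is optimal here: when $z\ne x$ the symbol $\sigma_p(Q)$ is not a scalar multiple of the identity, so no order-zero $Q'$ can produce a scalar symbol upon composition, which is exactly why the ``generalized'' notion (rather than plain normal hyperbolicity) is needed for the conformal Killing propagation operator.
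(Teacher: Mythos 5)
Your proposal is correct and follows essentially the same route as the paper: the paper's proof simply writes down the companion operator $\frac{1}{x}\square Q_a[v] + \left(\frac{1}{z}-\frac{1}{x}\right)\nabla_a\nabla^b Q_b[v] = \square^2 v_a + \text{l.o.t.}$, which is exactly your $Q'$ with $x'=1/x$, $z'=1/z$ and $m=2$. Your symbol computation (and the diagonalization on $p^\perp$ versus the line through $p^\sharp$) is a clean justification of why those reciprocal coefficients are the right, and only, choice.
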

\begin{proof}
The identity
\begin{equation}\label{eq:q-identity}
	\frac{1}{x}\square Q_a[v]
	+ \left(\frac{1}{z}-\frac{1}{x}\right) \nabla_a \nabla^b Q_b[v]
		= \square^2 v_a + \text{l.o.t}
\end{equation}
is all that is needed, which holds precisely when $x\ne 0$ and $z\ne 0$.
\end{proof}

Clearly, our $Q$ from~\eqref{eq:ck-propeq-conv} is a special case of the
operator in Lemma~\ref{lem:gennormhyp-vec} with $x=1 \ne 0$,
$z=\frac{2(n-1)}{n} \ne 0$. Indeed, for these values of $x$ and $z$ a 
computation shows that the identity \eqref{eq:q-identity} adopts 
the form
\begin{equation}\label{eq:q2-identity}
	\square Q_{a}[v] - \frac{(n-2)}{2(n-1)}\nabla_{a}\nabla^{b}Q_{b}[v]
	= 
	\square^2 v_{a}.
\end{equation}

The second problem is that the actual propagation identity for the $\CK$
operator is more convenient to express by coupling it to one of its
integrability conditions, which is propagated separately. Namely, due to
the identity (with the notation $(\delta\CK[v])_a = \nabla^b
\CK_{ab}[v]$)
\begin{dmath} \label{eq:ck2af}
	\nabla_a\nabla_b (\nabla^c v_c)
	= S_{ab}[\CK[v]]
	\coloneq -\frac{n}{2(n-2)} \square \CK_{ab}[v]
		+ \frac{n}{2(n-2)} \CK_{ab}[\delta \CK[v]] \\
		+ \frac{1}{2(n-1)} g_{ab} \nabla^c\nabla^d \CK_{cd}[v]
		- \frac{n}{(n-2)} R_a{}^c{}_b{}^d \CK_{cd}[v] ,
\end{dmath}
when $\CK_{ab}[v]=0$, the divergence $u = \nabla^c v_c$ satisfies the
\emph{covariant affine} equation
\begin{equation}
	\Af_{ab}[u] := \nabla_a\nabla_b u = 0 .
\end{equation}
It satisfies (modulo $R_{ab}=0$) the propagation identities

\begin{equation} \label{eq:af-propeq}
	\square u = \Af^c{}_{c}[u] , \quad
	\square \Af_{ab}[u] + 2 R_a{}^c{}_b{}^d \Af_{cd}[u]
		= \nabla_a \nabla_b \square u.
\end{equation}
Importantly, the propagation equation for $u$ follows from the
propagation equation for $v_a$ because
\begin{equation}
	\square (\delta v) = \frac{n}{2(n-1)} \nabla^c Q_c[v] .
\end{equation}

The coupled propagation identity for the $\CK$ operator then takes the
form
\begin{dmath} \label{eq:ck-propeq2}
	\square \CK_{ab}[v]
	+ 2R_{a}{}^c{}_{b}{}^d \CK_{cd}[v] \\
	+ \frac{2(n-2)}{n}\left(\Af_{ab}[\delta v]
		- \frac{1}{n} g_{ab} \Af^c{}_c[\delta v]\right)
	=\CK_{ab}[Q[v]] .
\end{dmath}

The coupled propagation system~\eqref{eq:ck-propeq2}
and~\eqref{eq:af-propeq} can be combined into a single propagation
identity for the $\CK$ operator, at the expense of making it higher
order:
\begin{dmath} \label{eq:ck-propeq4}
	\square^2 \CK_{ab}[v] + 2 \square (R_a{}^p{}_b{}^q \CK_{pq}[v])
		- \frac{4(n-2)}{n} R_a{}^c{}_b{}^d S_{cd}[\CK[v]] \\
	= \square \CK_{ab}[Q[v]]
		- \frac{(n-2)}{(n-1)} \nabla_a\nabla_b \nabla^c Q_c[v]
		+ \frac{(n-2)}{n(n-1)} g_{ab} \square \nabla^c Q_c[v]
	\\
	{}\hspace{16em} (P[E[\phi]] \hiderel{=} \sigma[Q[\phi]]) .
\end{dmath}
Introducing the trace-free operator $\overline{\Af}_{ab}[u] =
\Af_{ab}[u] - \frac{g_{ab}}{n} \Af_c{}^c[u]$, expanding the definition
of $S_{ab}$ from~\eqref{eq:ck2af} and using basic simplifications, the
propagation identity becomes
\begin{dmath}
	\square^2 \CK_{ab}[v] 
	+ 2\square (R_a{}^c{}_b{}^d \CK_{cd}[v])
	+ 2R_a{}^c{}_b{}^d\square \CK_{cd}[v]
	+ 4  R_a{}^c{}_b{}^d R_c{}^e{}_d{}^f \CK_{ef}[v]
	=
	\square \CK_{ab}[Q[v]] 
	- \frac{(n-2)}{(n-1)}\overline{\Af}_{ab}[\delta Q[v]] 
	\;.
\end{dmath}

The advantage of the higher order propagation
identity~\eqref{eq:ck-propeq4} is that it, together
with~\eqref{eq:ck-propeq-conv}, fits directly into the hypotheses of our
Lemmas~\ref{lem:propeq} and~\ref{lem:propeq-conv}, implying that there
exists a system of \emph{conformal Killing initial data} (CKID)
conditions $\CK^\Sigma[{v|_\Sigma},\nabla_0 {v|_\Sigma}] = 0$, whose
solutions on a Cauchy surface $\Sigma \subset M$ are are in bijection
with solutions of $\CK[v]=0$ on $M$. It remains only to compute it.

It is more practical to carry out this calculation starting from the
coupled second order system of propagation
identities~\eqref{eq:ck-propeq2} and~\eqref{eq:af-propeq}. For that
purpose, let us fix a Cauchy surface $\Sigma \subset M$ and follow the
notational conventions introduced in Section~\ref{sec:kid}. The
components of the conformal Killing and covariant affine operators
\begin{dgroup}
\begin{dmath}
	\left.\CK_{ab}[v]\right|_\Sigma
	\to \left.\begin{bmatrix}
			\CK_{00}[v] & \CK_{0B}[v] \\
			\CK_{0A}[v] & \CK_{AB}[v]
		\end{bmatrix}\right|_\Sigma \hiderel{=} 0
\end{dmath}
\begin{dmath}
	\text{and} \quad
	\left.\Af_{ab}[u]\right|_\Sigma
	\to \left.\begin{bmatrix}
			\Af_{00}[u] & \Af_{0B}[u] \\
			\Af_{0A}[u] & \Af_{AB}[u]
		\end{bmatrix}\right|_\Sigma \hiderel{=} 0
\end{dmath}
\end{dgroup}
take the explicit form
\begin{dgroup}
\begin{dmath}\label{eq:ck00}
	\CK_{00}[v] = \frac{2}{n}
	\left[
	(n-1) \nabla_{0}v_{0}
	- \pi v_{0} + D_{A}v^{A}
	\right] ,
\end{dmath}
\begin{dmath}\label{eq:ck0B}
	\CK_{0B}[v] = \nabla_{0}v_{B}
		+ D_{B}v_{0} - \pi_{B}{}^{A} v_{A} ,
\end{dmath}
\begin{dmath}\label{eq:ckAB}
	\CK_{AB}[v] = D_{A}v_{B} + D_{B}v_{A} -2 \pi_{AB} v_{0} \\
		- \frac{2}{n}g_{AB}\left(D_{C}v^{C} - \pi v_{0}
			- \nabla_{0}v_{0} \right),
\end{dmath}
\begin{dmath} \label{eq:af00}
	\Af_{00}[u] = \nabla_0\nabla_0 u ,
\end{dmath}
\begin{dmath} \label{eq:af0B}
	\Af_{0B}[u] = D_B(\nabla_0 u) - \pi_{BC} D^C u ,
\end{dmath}
\begin{dmath} \label{eq:afAB}
	\Af_{AB}[u] = D_A D_B u - \pi_{AB} (\nabla_0 u) .
\end{dmath}
\end{dgroup}

As a first step, completely analogous to the review of the Killing
vector case in Section~\ref{sec:kid}, we have the following
\begin{theorem} \label{thm:afid}
Consider a globally hyperbolic Einstein vacuum Lorentzian manifold,
$(M,g)$ of dimension $n>0$ with $R_{ab}=0$, and a Cauchy surface $\Sigma
\subset M$. The necessary and sufficient conditions yielding a set of
\emph{covariant affine initial data} (AfID) for $u$ on $\Sigma$ are
given by the following equations:
\begin{dgroup} \label{eq:afid}
\begin{dmath} \label{eq:afid0B}
	D_B(\nabla_0 u) - \pi_{BC} D^C u = 0 \;,
\end{dmath}
\begin{dmath} \label{eq:afid0AB}
	D_A D_B u - \pi_{AB} (\nabla_0 u) = 0 \;,
\end{dmath}
\begin{dmath}\label{eq:afid1AB}
	(r_{A B} + \pi\pi_{A B} -(\pi\cdot\pi)_{AB}) (\nabla_0 {u})
	\\
	-(D_{(A}{\pi_{B)C}} - D_{C}{\pi_{A B}}) D^{C}{u}
	= 0 \;.
\end{dmath}
\end{dgroup}
\end{theorem}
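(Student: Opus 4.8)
The plan is to apply Lemmas~\ref{lem:propeq} and~\ref{lem:propeq-conv} with $E[\phi]=0$ taken to be the covariant affine equation $\Af_{ab}[u]=\nabla_a\nabla_b u=0$ on the scalar $u$. The two propagation equations are read off from~\eqref{eq:af-propeq}: the order-two equation $Q[u]=\square u=0$ plays the role of $Q[\phi]=0$, and $P[h_{ab}]=\square h_{ab}+2R_a{}^c{}_b{}^d h_{cd}=0$ on a symmetric tensor $h$ (into which $\Af_{ab}[u]$, being a scalar Hessian, automatically falls) plays the role of $P[\psi]=0$; both are normally-hyperbolic, hence propagation equations in the sense of Section~\ref{sec:propeq}. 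The second identity in~\eqref{eq:af-propeq} is precisely the propagation identity~\eqref{eq:propeq}, $P[\Af[u]]=\nabla_a\nabla_b\, Q[u]$, with $\sigma=\nabla_a\nabla_b(\cdot)$ and $\tau[G]=0$ on the Einstein vacuum background; and, trivially, $Q[u]=\Af^c{}_c[u]=\rho[\Af[u]]$ with $\rho$ the metric trace, so Lemma~\ref{lem:propeq-conv} also applies. From the latter, every solution of $\Af[u]=0$ on $M$ restricts to a solution of the AfID system, giving \emph{necessity}; the content of the theorem is \emph{sufficiency}, i.e.\ computing $E^\Sigma$ and checking it is equivalent to~\eqref{eq:afid}.

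For sufficiency, Lemma~\ref{lem:propeq} tells us that the unique solution $u$ of $\square u=0$ with Cauchy data $(u|_\Sigma,\nabla_0 u|_\Sigma)$ satisfies $\Af_{ab}[u]=0$ on all of $M$ if and only if $\Af_{ab}[u]|_\Sigma=0$ and $\nabla_0\Af_{ab}[u]|_\Sigma=0$, and it remains to reduce these to purely spatial conditions. Using the components~\eqref{eq:af00}--\eqref{eq:afAB}: the $0B$ and $AB$ parts of $\Af_{ab}[u]|_\Sigma=0$ are already spatial and give~\eqref{eq:afid0B} and~\eqref{eq:afid0AB}, while the $00$ part, $\nabla_0\nabla_0 u|_\Sigma=0$, is redundant because $\square u=0$ rewrites it as $D^CD_C u-\pi\nabla_0 u=0$, the trace of~\eqref{eq:afid0AB}. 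For the first normal derivative, one splits $\nabla_0\Af_{ab}[u]$ along the foliation, eliminates $\nabla_0^2 u$ and $\nabla_0^3 u$ by $\square u=0$ and $\nabla_0\square u=0$, replaces $\nabla_0\pi_{AB}$ using the vacuum Einstein equations~\eqref{eq:ricci-split}, and commutes $\nabla_0$ past $D_A$ with~\eqref{eq:commutator}. The $00$ and $0B$ components and the trace of the $AB$ component then collapse, modulo~\eqref{eq:afid0B}--\eqref{eq:afid0AB}, to identities (no new conditions), and the trace-free part of the $AB$ component --- after using the Hamiltonian constraint $r+\pi^2-\pi\cdot\pi=0$ to make its trace manifestly vanish --- produces the last equation~\eqref{eq:afid1AB}.

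To conclude, one must also verify the reverse implication: that~\eqref{eq:afid0B}, \eqref{eq:afid0AB}, \eqref{eq:afid1AB} together with $\square u=0$ imply \emph{all} components of $\Af_{ab}[u]|_\Sigma=0$ and $\nabla_0\Af_{ab}[u]|_\Sigma=0$, so that no information is lost in passing to the reduced list~\eqref{eq:afid}. Combined with Lemma~\ref{lem:propeq} (data-to-bulk) and Lemma~\ref{lem:propeq-conv} (bulk-to-data), this yields the claimed bijection, hence the necessary-and-sufficient statement. I expect the main obstacle to be exactly this bookkeeping of redundancies among the $\nabla_0\Af_{ab}[u]|_\Sigma=0$ components: recognizing that the $00$, $0B$, and trace-of-$AB$ parts are genuine consequences of the three retained equations requires carefully tracking the curvature terms from~\eqref{eq:commutator}, the $\nabla_0$-derivative of the constraint~\eqref{eq:ricci-split}, and the large cancellations these produce. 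Everything else is a routine computation modeled on the Killing case of Section~\ref{sec:kid}, in fact somewhat simpler since $u$ is a scalar rather than a covector.
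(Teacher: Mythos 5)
Your proposal is correct and follows essentially the same route as the paper: the same propagation identities~\eqref{eq:af-propeq} feed into Lemmas~\ref{lem:propeq} and~\ref{lem:propeq-conv}, the components $\Af_{0B}[u]$ and $\Af_{AB}[u]$ give~\eqref{eq:afid0B}--\eqref{eq:afid0AB}, $\nabla_0\Af_{AB}[u]$ gives~\eqref{eq:afid1AB} after eliminating $\nabla_0\nabla_0 u$, and the remaining components are redundant. The only difference is that you assert the redundancy of $\nabla_0\Af_{0B}[u]$ as bookkeeping to be verified, whereas the paper exhibits it explicitly as a curl, $\nabla_0 \Af_{0B}[u] = 2 g^{CA} D_{[C}\bigl(D_{B]} D_A u - \pi_{B]A} (\nabla_0 u)\bigr)$, modulo $\Af_{00}[u]=0$ and the vacuum constraints.
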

\begin{proof}
The components $\Af_{0B}[u]$ and $\Af_{AB}[u]$ directly
give~\eqref{eq:afid0B} and~\eqref{eq:afid0AB}. The derivative $\nabla_0
\Af_{AB}[u]$ gives~\eqref{eq:afid1AB}, after using
$\Af_{00}[u]$ to eliminate $\nabla_0\nabla_0 u$, and further
simplifications from the first two initial data conditions. Lastly, the
condition $\nabla_0 \Af_{0B}[u]=0$ turns out not to be independent from
the other ones due to the identity
\begin{equation}
	\nabla_0 \Af_{0B}[u]
	= 2 g^{CA} D_{[C} \left(D_{B]} D_A u - \pi_{B]A} (\nabla_0 u)\right) ,
\end{equation}
after simplifications from $\Af_{00}[u]=0$ and the vacuum Einstein
equations.
\end{proof}

So, now what we need to do is start with the conditions
${\CK[v]|_\Sigma} = 0$, \ldots, ${\nabla_0^3\CK[v]|_\Sigma} = 0$, and
extract from them a (hopefully small) subset of components whose
vanishing ensures the vanishing of the remaining components as well.

\begin{theorem} \label{thm:ckid}
Consider a globally hyperbolic Einstein vacuum Lorentzian manifold,
$(M,g)$ of dimension $n>2$ with $R_{ab}=0$, and a Cauchy surface $\Sigma
\subset M$. For a conformal Killing vector $v_a$, $\CK_{ab}[v]=0$, its
rescaled divergence $u = \frac{(n-1)}{n}\nabla^a v_a$ and its derivative
$\nabla_0 u$ take the following form when restricted to $\Sigma$, after
eliminating the $\nabla_0 v_a$ derivatives,
\begin{dgroup} \label{eq:div-v}
\begin{dmath}
	u = \left(D_C v^C - \pi v_0\right) \;,
\end{dmath}
\begin{dmath}
	\nabla_0 u
		= \frac{1}{n-1} \pi u + \left(-D^{A}{D_{A}{v_{0}}}
			+ (\pi\cdot\pi) v_{0} +(D^{A}{\pi}) v_{A}\right) \;.
\end{dmath}
\end{dgroup}
Using the above notation, the necessary and sufficient conditions
yielding a set of \emph{conformal Killing initial data} (CKID) for $v_a$
on $\Sigma$ are given by the following equations:
\begin{dgroup} \label{eq:ckid}
\begin{dmath} \label{eq:ckid0}
	D_{A}v_{B} + D_{B}v_{A} -2 \pi_{AB} v_{0}
		- \frac{2}{n-1}g_{AB}
				u 
	= 0 \;,
\end{dmath}
\begin{dmath} \label{eq:ckid1}
	D_{B}D_{A}v_{0}
	+ (2 (\pi\cdot\pi)_{AB} - \pi\pi_{AB} - r_{AB}) v_{0}
	- 2\pi_{C(A} D_{B)}v^{C} + v^{C}(D_{C}\pi_{AB})
	+ \frac{1}{n-1}(u\pi_{AB}+g_{AB}\nabla_{0}u)
	= 0 \;,
\end{dmath}

\begin{dmath} \label{eq:ckidaf0AB}
	D_A D_B u 
		- \pi_{AB} (\nabla_0 u) 
	= 0 ,
\end{dmath}
\begin{dmath} \label{eq:ckidaf1AB}
	(r_{A B} + \pi\pi_{A B} -(\pi\cdot\pi)_{AB}) (\nabla_0 u) 
	\\
		-(D_{(A}{\pi_{B)C}} - D_{C}{\pi_{A B}}) D^{C} u 
	= 0 \;.
\end{dmath}
\end{dgroup}
\end{theorem}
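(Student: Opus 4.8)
The plan is to apply the machinery of Lemma~\ref{lem:propeq} (and Lemma~\ref{lem:propeq-conv}) exactly as in the Killing and homothetic cases, but now using the coupled second-order propagation system consisting of~\eqref{eq:ck-propeq2} for $\CK_{ab}[v]$ together with~\eqref{eq:af-propeq} for the auxiliary field $u = \delta v$ (or, equivalently, the single fourth-order identity~\eqref{eq:ck-propeq4}). Since $Q_a[v]$ from~\eqref{eq:ck-propeq-conv} is generalized normally-hyperbolic by Lemma~\ref{lem:gennormhyp-vec} (with $x=1$, $z=\tfrac{2(n-1)}{n}$), it is a propagation equation; and the wave operators appearing on the left of~\eqref{eq:ck-propeq2} and~\eqref{eq:af-propeq} are manifestly normally-hyperbolic, so the hypotheses of Lemma~\ref{lem:propeq} are met. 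The abstract lemma then guarantees that a purely spatial CKID system $\CK^\Sigma[v|_\Sigma,\nabla_0 v|_\Sigma]=0$ exists; the content of the theorem is to compute it explicitly and reduce it to the four equations~\eqref{eq:ckid0}--\eqref{eq:ckidaf1AB}.

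The concrete calculation proceeds as follows. First I would record that the unknowns on $\Sigma$ are $v_0$, $v_A$, $\nabla_0 v_0$, $\nabla_0 v_A$, together with the auxiliary $u$, $\nabla_0 u$; the components $\CK_{00}[v]$ and $\CK_{0B}[v]$ in~\eqref{eq:ck00}--\eqref{eq:ck0B} are used to solve for $\nabla_0 v_0$ and $\nabla_0 v_B$ as type~(a) relations, which is precisely the "eliminating the $\nabla_0 v_a$ derivatives" step that produces the expressions~\eqref{eq:div-v} for $u$ and $\nabla_0 u$ (the latter obtained by also differentiating $\CK_{00}$, or from $\Af_{00}$). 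Next, $\CK_{AB}[v]=0$ directly yields~\eqref{eq:ckid0}. Then $\nabla_0\CK_{AB}[v]=0$, after substituting the eliminations and adding a convenient multiple of $\nabla_0\delta v$ as in the HKID proof, yields~\eqref{eq:ckid1}. The affine component $\Af_{AB}[u]=0$ from~\eqref{eq:afAB} gives~\eqref{eq:ckidaf0AB}, and $\nabla_0\Af_{AB}[u]=0$ — after using $\Af_{00}[u]=0$ to kill $\nabla_0\nabla_0 u$ and simplifying via the vacuum constraints — gives~\eqref{eq:ckidaf1AB}, exactly as in Theorem~\ref{thm:afid}. The remaining components ($\CK_{00}$, $\CK_{0B}$, $\nabla_0\CK_{00}$, $\nabla_0\CK_{0B}$, $\Af_{00}$, $\Af_{0B}$, $\nabla_0\Af_{0B}$, and all of the second and third $\nabla_0$-derivatives of $\CK$ and $\Af$) must be shown to be consequences of the four retained equations modulo the propagation equations; for the lower-order ones this is the type~(a) bookkeeping, and for the higher-order ones one invokes the splittings of the propagation identities~\eqref{eq:ck-propeq2}, \eqref{eq:af-propeq}, \eqref{eq:ck2af} and~\eqref{eq:ck-propeq-conv} to re-express $\nabla_0^2$ and $\nabla_0^3$ data in terms of the spatial data already constrained.

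For sufficiency I would note that, by Lemma~\ref{lem:propeq}, once all of $\CK[v]|_\Sigma, \ldots, \nabla_0^3\CK[v]|_\Sigma$ vanish and the analogous affine data vanish, the solution of the propagation equation $Q[v]=0$ (equivalently the fourth-order normally-hyperbolic equation obtained via~\eqref{eq:q2-identity}) satisfies $\CK_{ab}[v]=0$ on all of $M$; and conversely, by Lemma~\ref{lem:propeq-conv} applied to~\eqref{eq:ck-propeq-conv}, any bulk conformal Killing vector solves $Q[v]=0$ with the required vanishing initial data for $P[\psi]$, hence restricts to a CKID solution. This establishes the claimed bijection. The main obstacle I anticipate is the reduction/independence bookkeeping in the preceding paragraph: showing that exactly these four equations suffice requires tracking how the third-order normal derivatives of $\CK$ and the second-order ones of $\Af$ collapse — in particular verifying that $\nabla_0\CK_{0A}[v]=0$, $\nabla_0^2\CK_{AB}[v]=0$, etc., follow from~\eqref{eq:ckid0}--\eqref{eq:ckidaf1AB} using the Gauss--Codazzi relations, the commutator~\eqref{eq:commutator}, and the vacuum Einstein equations~\eqref{eq:ricci-split}. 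This is the step where the stopping at third time-differentiation in~\cite{berger} becomes visible as insufficient, and it is where the bulk of the (machine-checked) computation lies; I would organize it by separating the trace and trace-free parts of each tensor equation and repeatedly substituting the vacuum constraints, exactly as in the HKID derivation but with the extra $\Af[u]$ sector carried along throughout.
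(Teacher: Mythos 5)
Your proposal is correct and follows essentially the same route as the paper: reduce via the coupled propagation identities~\eqref{eq:ck-propeq2} and~\eqref{eq:af-propeq} to the vanishing of $\CK$, $\nabla_0\CK$, $\Af$, $\nabla_0\Af$ on $\Sigma$, use $\CK_{00}$ and $\CK_{0B}$ to eliminate normal derivatives of $v_a$, extract the four retained equations from $\CK_{AB}$, $\nabla_0\CK_{AB}$, $\Af_{AB}$, $\nabla_0\Af_{AB}$, and discharge the remaining components (in particular $\Af_{00}$ and $\Af_{0B}$ via the splitting of~\eqref{eq:ck2af}, and $\nabla_0\Af_{0B}$ as in Theorem~\ref{thm:afid}). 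The only differences are presentational, e.g.\ the paper phrases the order-reduction step as the schematic identities~\eqref{eq:ckaf-propeq-id}.
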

\begin{proof}
Schematically (modulo the $Q_a[v]=0$ propagation equation), the
propagation identities~\eqref{eq:ck-propeq2}
and~\eqref{eq:af-propeq} imply, respectively, the initial data
identities
\begin{dgroup} \label{eq:ckaf-propeq-id}
\begin{dmath} \label{eq:ck-propeq2-id}
	\nabla_0\nabla_0 \CK = O(\Af) + O(\nabla_0 \CK) + O(\CK) ,
\end{dmath}
\begin{dmath} \label{eq:af-propeq-id}
	\nabla_0\nabla_0 \Af = O(\nabla_0 \Af) + O(\Af) ,
\end{dmath}
\intertext{%
where $\CK$ and $\Af$ stand respectively for the components of $\CK[v]$
and $\Af[\delta v]$, while the notation $O(-)$ indicates proportionality
to the argument or any spatial derivative thereof. Taking $\nabla_0$
derivatives of~\eqref{eq:ck-propeq2-id} and
using~\eqref{eq:af-propeq-id} to eliminating as many higher order
$\nabla_0$ derivatives of $\CK$ and $\Af$ as possible, we obtain the
further initial data identities%
}
\begin{dmath}
	\nabla_0^3 \CK = O(\nabla_0 \Af) + O(\Af) + O(\nabla_0 \CK) + O(\CK) ,
\end{dmath}
\begin{dmath}
	\nabla_0^4 \CK = O(\nabla_0 \Af) + O(\Af) + O(\nabla_0 \CK) + O(\CK) .
\end{dmath}
\end{dgroup}
Thus, it is sufficient to keep only the $\CK$, $\nabla_0 \CK$, $\Af$,
$\nabla_0 \Af$ components for the CKID, and by Lemmas~\ref{lem:propeq}
and~\ref{lem:propeq-conv} the solutions of the resulting CKID conditions
on $\Sigma \subset M$ would be in bijection with the solutions of the
conformal Killing equation on $M$.

The $\CK_{00}[v]$ and $\CK_{0B}[v]$ components can be used to eliminate
any $\nabla_0$ derivatives of $v_0$ and $v_B$, respectively, and thus
they and their $\nabla_0$ derivatives need not appear in the final CKID
system. The $\CK_{AB}[v]$ and $\nabla_0\CK_{AB}[v]$ components, after
eliminating the $\nabla_0$ derivatives, give
respectively~\eqref{eq:ckid0} and~\eqref{eq:ckid1}, where
also~\eqref{eq:ckid0} was used to simplify the form of~\eqref{eq:ckid1}.

Theorem~\ref{thm:afid} has already shown that the vanishing of $\Af$ and
$\nabla_0\Af$ are equivalent to the vanishing of $\Af_{00}$ and the AfID
system~\eqref{eq:afid}. It remains only to plug in the following
expressions, with
\begin{equation}
 u = \frac{(n-1)}{n} \delta v
	 = \frac{(n-1)}{n} \left(-\nabla_0 v_0 - \pi v_0 + D_C v^C\right) ,
\end{equation}
which by direct calculation, after eliminating the
$\nabla_0$ derivatives of $v_0$ and $v_B$ using $\CK_{00}[v]$ and
$\CK_{0B}[v]$, leads to the expressions in~\eqref{eq:div-v}.
The resulting $\Af_{00}[\delta v]$
and~$\Af_{0B}[\delta v]$ expressions are not independent, due to the
identities
\begin{dmath}
	\Af_{00}[\delta v] = \frac{n}{2(n-1)(n-2)} \left[
		-(2n-3) \pi^{AC}\nabla_0 \CK_{AC}[v] \\
		+ D^A D^C \CK_{AC}[v] -r^{AC} \CK_{AC}[v]
	\right] \;,
\end{dmath}
\begin{dmath}
	\Af_{0B}[\delta v] = \frac{n}{2(n-2)} \left[
		D^A\nabla_0 \CK_{AB}[v] + \pi^{AC} D_A\CK_{CB}[v] - \pi_B{}^C D^A \CK_{AC}[v]
		- \pi^{AC} D_B \CK_{AC}[v]
		+ (D^C \pi) \CK_{CB}[v] - (D^A \pi^C{}_B) \CK_{AC}[v]
	\right] \;,
\end{dmath}
again modulo $\CK_{00}[v]=0$ and $\CK_{0B}[v]=0$, which are obtained
by splitting the spacetime identity~\eqref{eq:ck2af}. The
conditions~\eqref{eq:afid0AB} and~\eqref{eq:afid1AB}, after the $u$ and
$\nabla_0 u$ substitution, directly give respectively the remaining CKID
conditions~\eqref{eq:ckidaf0AB} and~\eqref{eq:ckidaf1AB}, which
completes the proof.
\end{proof}

Obviously, a Killing vector $v$, is a conformal Killing vector
satisfying the extra divergence condition $\delta v = 0$. As we have
seen in the above proof, according to~\eqref{eq:div-v}, the vanishing of
the divergence $\delta v$ and its derivative $\nabla_0 \delta v$ are
equivalent to the initial data conditions
\begin{dgroup}
\begin{dmath}
	D_C v^C - \pi v_0 = 0 \;,
\end{dmath}
\begin{dmath}
	-D^{A}{D_{A}{v_{0}}}
			+ (\pi\cdot\pi) v_{0} +(D^{A}{\pi}) v_{A} = 0 \;,
\end{dmath}
\end{dgroup}
once $\nabla_0$ derivatives have been eliminated using $\CK_{00}[v]=0$
and $\CK_{0B}[v]=0$. Thus, when the initial data for $v_a$ is divergence
free in the above sense, it is obvious that the CKID
conditions~\eqref{eq:ckidaf0AB} and~\eqref{eq:ckidaf1AB} are
tautological, while the conditions~\eqref{eq:ckid0} and~\eqref{eq:ckid0}
recover the KID system~\eqref{eq:kid}, as was to be expected.

\section{Discussion}
We have presented for the first time in the literature a set of
necessary and sufficient conditions (the CKID equations) ensuring that a
vacuum initial data set of the Einstein's equations in any dimension
($n>2$) admits a conformal Killing vector in any globally hyperbolic
development of this initial data. In addition to the standard quantities
required for the construction of vacuum initial data (the first and the
second fundamental forms, given respectively by $g_{AB}$, $\pi_{AB}$ in
our notation) we need the {\em conformal Killing lapse} $v_0$ and {\em
conformal Killing shift} $v_A$. The CKID conditions are given by
\eqref{eq:ckid} of Theorem~\ref{thm:ckid} and they are a set of linear
PDEs for $v_0, v_A$ on the Riemannian manifold with extrinsic curvature
($\Sigma$, $g_{AB}$, $\pi_{AB}$). Along the way, we have reviewed
construction of the Killing initial data (KID) and gave a new derivation
of the homothetic Killing initial data (HKID) equations. Just as in the
KID case, the HKID and CKID equations likely constitute an
overdetermined elliptic system for $v_0, v_A$, but the true extent of
this assertion requires a separate investigation.

A natural continuation of this work would be to try to construct initial
data systems for other geometric PDEs, like for instance Killing-Yano
equations, higher rank Killing tensor equations, and their conformal
and/or closed versions. For instance, the existence of a principal
(closed and non-degenerate) conformal Killing-Yano 2-form is known to
characterize the Kerr-NUT-(A)dS family of higher dimensional black holes
and related solutions~\cite{fkk-review}. So it is reasonable to suppose
that the knowledge of the corresponding initial data system could be of
use in the study of the stability and rigidity of this family. In $4$
spacetime dimensions, the conformal Killing-Yano 2-form equation is
equivalent to the Killing $(2,0)$-spinor equation~\cite{mclvdb}, whose
initial data system was already constructed in~\cite{GOMEZLOBO2008}. A
tensorial version of this initial data system will appear in future
work, along with an extension to the closed conformal Killing-Yano
2-form case in higher dimensions. The question of which other variations
of the Killing equations have initial data systems appears to be
completely open.

Since, in $4$ spacetime dimensions, the conformal Killing equation is
equivalent to the Killing $(1,1)$-spinor
equation~\cite{vanNieuwenhuizen1984}, it would be interesting to
translate our CKID system into the initial data conditions for Killing
$(1,1)$-spinors. Alternatively, such initial data conditions in $4$
dimensions could be rederived from the relevant spinorial propagation
identity used as an intermediate result in~\cite{GOMEZLOBO2008}.

In all the known cases where initial data systems have been found, a
certain amount of trial and error has been necessary for success. It
would be an interesting problem to find a systematic way to identify
those cases where no initial data system can exist.

\paragraph{Acknowledgements}
The authors thank Piotr Chru\'sciel, Josef \v{S}ilhan and Juan A.\ 
Valiente Kroon for interesting comments, as well as Marc Mars for
pointing out that a previous version of our main result was incomplete.
IK was partially supported by the Praemium Academiae
of M.~Markl, GA\v{C}R project 19-09659S and RVO: 67985840.
AGP is supported by  GA\v{C}R project 19-01850S.
AGP also thanks the partial support from the projects
IT956-16 (``Eusko Jaur\-la\-ri\-tza'', Spain) and
PTDC/MAT-ANA /1275/2014\\ 
(``Funda\c{c}\~{a}o para a Ci\^{e}ncia e a Tecnologia'', Portugal).

\bibliographystyle{utphys-alpha}
\bibliography{kid}

\end{document}